	\renewcommand{\cite}{\citep}
    \title{\bf The Metric Distortion of Multiwinner Voting}
    \author[1]{Ioannis Caragiannis}
    \author[2]{Nisarg Shah}
    \author[3]{Alexandros A. Voudouris}
    \affil[1]{Department of Computer Science, Aarhus University}
    \affil[2]{Department of Computer Science, University of Toronto}
    \affil[3]{School of Computer Science and Electronic Engineering, University of Essex}
	\date{}
\tikzstyle{overbrace text style}=[font=\tiny, above, pos=.5, yshift=5pt]
\tikzstyle{overbrace style}=[decorate,decoration={brace,raise=5pt,amplitude=3pt}]
\crefname{algocf}{alg.}{algs.}
\Crefname{algocf}{Algorithm}{Algorithms}
\newtheorem{theorem}{Theorem}
\newtheorem{corollary}{Corollary}
\newtheorem{lemma}{Lemma}
\newtheorem*{claim*}{Claim}
\newtheorem*{theorem*}{Theorem}
\theoremstyle{definition}
\DeclareMathOperator*{\argmin}{\arg\min}
\newcommand{\floor}[1]{\lfloor{#1}\rfloor}
\newcommand{\ceil}[1]{\lceil{#1}\rceil}
\renewcommand{\top}{\operatorname{top}}
\newcommand{\set}[1]{\{#1\}}
\renewcommand{\subset}{\subseteq}
\renewcommand{\ge}{\geqslant}
\renewcommand{\geq}{\geqslant}
\renewcommand{\le}{\leqslant}
\renewcommand{\leq}{\leqslant}
\newcommand{\bell}{\boldsymbol \ell}
\newcommand{\SC}{\text{SC}}
\newcommand{\E}{\mathbb{E}}
\newcommand{\N}{\mathbb{N}}
\newcommand{\types}{\mathcal{T}}
\begin{document}

\maketitle

\begin{abstract}
We extend the recently introduced framework of metric distortion to multiwinner voting. In this framework, $n$ agents and $m$ alternatives are located in an underlying metric space. The exact distances between agents and alternatives are unknown. Instead, each agent provides a ranking of the alternatives, ordered from the closest to the farthest. Typically, the goal is to select a single alternative that approximately minimizes the total distance from the agents, and the worst-case approximation ratio is termed distortion. In the case of multiwinner voting, the goal is to select a committee of $k$ alternatives that (approximately) minimizes the total cost to all agents. We consider the scenario where the cost of an agent for a committee is her distance from the $q$-th closest alternative in the committee. We reveal a surprising trichotomy on the distortion of multiwinner voting rules in terms of $k$ and $q$: The distortion is unbounded when $q \leq k/3$, asymptotically linear in the number of agents when $k/3 < q \leq k/2$, and constant when $q > k/2$.
\end{abstract}

\section{Introduction}\label{sec:intro}

The most canonical problem in voting theory is to aggregate ranked preferences of $n$ individual agents over a set of $m$ alternatives to reach a collective decision. Examples of such decisions include selecting a single alternative (single-winner voting), selecting $k$ out of $m$ alternatives for a fixed $k > 1$ (multiwinner voting), and selecting a set of costly alternatives subject to a budget constraint (participatory budgeting). In centuries of research on voting, perhaps the most prominent approach to designing voting rules has been the axiomatic approach, in which one fixes several qualitative axioms and seeks voting rules satisfying them. Unfortunately, this approach has often led to impossibility results, such as Arrow's impossibility theorem~\cite{Arr51}.

To circumvent this, \citet{procaccia2006distortion} proposed the (utilitarian) \emph{distortion} framework for analyzing, comparing, and designing single-winner voting rules. Under this framework, the ordinal preferences expressed by agents are viewed as proxies for their underlying cardinal utilities, and the goal of a voting rule is to optimize the worst-case approximation ratio (distortion) of the social welfare (the total utility of the agents). Under minimal assumptions, this framework offers a \emph{quantitative} comparison of voting rules. It has been used successfully to analyze the distortion of well-known methods~\cite{caragiannis2011embedding} and to identify voting rules with optimal distortion~\cite{boutilier2015optimal}. The framework has also been extended to multiwinner voting~\cite{caragiannis2017subset} and participatory budgeting~\cite{benade2017participatory}, under the assumption that the utility of an agent for a set of alternatives is the maximum and the sum of her utilities for the alternatives in the set, respectively. 

\citet{anshelevich2015approximating} built on this idea to propose the \emph{metric distortion} framework, in which agents and alternatives are embedded in an underlying metric space, and the \emph{cost} of an agent for an alternative is the distance between them. An agent still ranks the alternatives, but now in a non-decreasing order of their distance from her. Instead of maximizing the social welfare, the goal is now to minimize the social cost (the total cost of the agents). Like in the utilitarian case, scholars have analyzed the metric distortion of prominent voting rules~\cite{skowron2017stv,goel2017metric,munagala2019improved,kempe2020duality} and identified rules with optimal metric distortion~\cite{gkatzelis2020resolving}. For a detailed overview, we refer the reader to the survey by \citet{anshelevich2021distortion}. 

The goal of our work is to extend the metric distortion framework to multiwinner voting, where the objective is to select a subset of $k$ alternatives (committee) for a given $k > 1$. To the best of our knowledge, the only prior works to address multiwinner voting with general metric costs are those of \citet{goel2018relating} and \citet{chen2020favorite}. Both focus on a model in which an agent's cost for a committee is the sum of her distances to the alternatives in the committee. In a sense, this assumes that an agent cares equally about all the alternatives chosen in the committee. 

However, in many applications, an agent may care only about a few alternatives in the committee, typically the ones she prefers more. For example, when parliament members are chosen in a political election, each voter may associate with just one (or a few) of the elected candidates as her representative(s). Similarly, if a city builds parks at multiple locations, each resident may only be able to access a few parks closest to her. Motivated by these applications, we consider the case where an agent's cost for a committee of $k$ alternatives is her distance to the $q$-th closest alternative in the committee, for a given $q \le k$. Note that for $q=1$, this is the distance to the closest alternative in the committee, whereas for $q=k$, it is the distance to the farthest one. Our main research question is:
\begin{quote}
	\emph{What is the optimal distortion for selecting a committee of $k$ alternatives under this cost model? How does it depend on the relation between $q$ and $k$? Can the optimal distortion be achieved via efficient voting rules?}
\end{quote} 

Before proceeding further, note that another reason why it may not be desirable to model an agent's cost for a committee as the \emph{sum} of her distances to the alternatives in the committee is that the optimal committee can suffer from the tyranny of the majority; that is, it may consist entirely of the alternatives liked by the majority and include none liked by a minority. This is somewhat reflected by the fact that repeatedly applying a single-winner voting rule is known to work well in this model~\cite{goel2018relating}. Due to this, prior work on multiwinner voting, both in the distortion literature~\cite{caragiannis2017subset} and elsewhere~\cite{CC83,Mon95,PRZ08,LB11c}, aims to ensure that there is \emph{at least one} alternative in the committee that every agent likes well (which corresponds to $q=1$ in our model). An interesting byproduct of our cost model is that an agent's submitted ranking of individual alternatives can be used to determine her ordinal preferences over committees; we use this fact to derive some of our results. 

\subsection{Our Contributions}\label{sec:contributions}
Recall that $n$ is the number of agents and $m$ is the number of alternatives. We reveal a surprising trichotomy on the best possible distortion for multiwinner voting. 

When $q \le k/3$, the distortion is unbounded. This holds even for randomized voting rules, and even when $n$ and $m$ are only linear in $k$. When $q \in (k/3,k/2]$, the best possible distortion is $\Theta(n)$. Here, the upper bound is obtained via a novel voting rule that is deterministic and efficient, while the lower bound holds even for randomized voting rules and when $m$ is linear in $k$ or $q$. Finally, when $q > k/2$, the best possible distortion is $3$ for deterministic rules, and between $2$ and $3$ for randomized rules. For this case, we show that the costs of agents for committees satisfy the triangle inequality. Hence, we can reduce multiwinner voting to single-winner voting by treating each committee as a separate alternative and applying any single-winner voting rule, allowing us to borrow known best possible distortion bounds from the single-winner case to the multiwinner one. This is where we use the fact that under our cost model, we can deduce an agent's ordinal preferences over committees from her provided ranking of individual alternatives. 

However, this reduction is not efficient as we need to apply the single-winner voting rule on an instance with $\binom{m}{k}$ alternatives. To that end, we also provide an efficient reduction. We show that there exists an agent such that the committee consisting of the $k$ alternatives she prefers the most has social cost no worse than $3$ times the optimal. Thus, we can apply any single-winner voting rule with distortion $\rho$ on a reduced instance with only $n$ committees (one committee per agent) as alternatives and obtain a multiwinner rule with distortion at most $3\rho$. In particular, by applying the \textsc{PluralityMatching} rule of \citet{gkatzelis2020resolving}, which is known to achieve the best possible distortion of $3$ in the single-winner case, we obtain distortion at most $9$ in polynomial time. We also show that an efficient reduction of this type cannot be used to achieve distortion better than $5.207$. An overview of our results is given in Table~\ref{tab:results}.

\begin{table}[t]
    \centering
    \begin{tabular}{c|cc}
                            & Deterministic 	    & Randomized \\ \hline 
       $q \leq k/3$         &   $+\infty$    		& $+\infty$     \\[8pt]
       $q \in (k/3, k/2]$   &   $\Theta(n)$      	& $\Theta(n)$ 	  \\[8pt]
       \multirow{3}{*}{$q > k/2$}	&  $3$ {\small (exp time, general $n$)}    & \multirow{3}{*}{$3-2/n$}	   \\ 
                            &  $3$ {\small (poly time, constant $n$)} & \\
					        &  $\leq 9$ {\small (poly time, general $n$)} & \\ \hline
    \end{tabular}
    \caption{An overview of our distortion bounds.}
    \label{tab:results}
\end{table}

\subsection{Related Work}\label{sec:related}
The works of \citet{goel2018relating} and \citet{chen2020favorite} are the most closely related to ours as they consider metric distortion for multiwinner voting. As mentioned before, both focus on a setting where the cost of an agent for a committee is the sum of her distances to the alternatives in the committee.\footnote{\citet{chen2020favorite} focus only on the case of $k=m-1$, but also consider another cost model.} \citet{goel2018relating} show that selecting a committee by repeatedly applying any single-winner voting rule with distortion $\delta$ yields a distortion of at most $\delta$. Since the best possible single-winner distortion is known to be $3$~\cite{gkatzelis2020resolving}, this implies that a distortion of $3$ can be achieved for any $k$ in this model. \citet{chen2020favorite} directly present a voting rule achieving distortion $3$ for the case of $k=m-1$ and prove this to be best possible. 

\citet{jaworski2020evaluating} consider a model where voters (agents) and candidates (alternatives) have preferences over a set of \emph{binary issues}, which induce the preferences of the voters over the candidates; specifically, voters rank candidates based on the number of issues they agree on. The elected committee uses majority voting to decide on each issue, and the cost of a voter is the number of issues for which the decision differs from her preferred outcome. This can be viewed as a metric distortion model, but with the specific Hamming distance metric. Also, the cost of a voter for a committee depends on the locations of the candidates in the committee, and not just on their distances to her. \citet{meir2021representative} consider a similar model where voters are also candidates, and show that sortition --- picking $k$ of the voters uniformly at random --- leads to low distortion. 

The distortion of randomized rules has also received significant attention. For single-winner voting, the best possible distortion under the utilitarian model is known to be $\tilde{\Theta}(\sqrt{m})$~\cite{boutilier2015optimal}, while that under the metric model still remains a challenging open question~\cite{anshelevich2017randomized,kempe2020communication,charikar2022randomized}. \citet{caragiannis2017subset} provide bounds on the best possible distortion of randomized multiwinner voting rules under the utilitarian model, and our work does so under the metric model. 

More broadly, there is a huge literature on multiwinner voting that focuses on desiderata other than distortion, such as proportional representation~\cite{aziz2017justified,peters2020proportionality}, committee diversity~\cite{bredereck2018multiwinner}, monotonicity or consistency axioms~\cite{elkind2017properties}, explainability~\cite{peters2021market}, etc. We refer the interested reader to the book chapter by \citet{faliszewski2017multiwinner} for an overview. 

\section{Preliminaries}
For $p \in \mathbb{N}$, define $[p] = \set{1,\ldots,p}$. An instance of our problem is given by the tuple $I=(N,A,d,k,q)$, where:
\begin{itemize}
    \item $N$ is a set of $n \geq 2$ {\em agents}.
    \item $A$ is a set of $m \geq 2$ {\em alternatives}.
    \item $d$ is a pseudometric over $N \cup A$ with $d(x,y)$ denoting the \emph{distance} between $x,y \in N \cup A$. Being a pseudometric, $d$ satisfies, for all $x,y,z \in N \cup A$, $d(x,x) = 0$, $d(x,y)=d(y,x)$, and the \emph{triangle inequality} $d(x,y) \le d(x,z)+d(z,y)$. Since our framework only uses distances between agents and alternatives (and not between two agents or between two alternatives), we use the following equivalent formulation of the triangle inequality~\cite{goel2017metric}: $d(i,x) \leq d(i,y) + d(j,y) + d(j,x)$ for all agents $i,j \in N$ and alternatives $x,y \in A$.\footnote{When proving lower bounds, we will often design a worst-case pseudometric $d$ by embedding agents and alternatives in the 1D Euclidean space and taking the Euclidean distance between them.}
    \item $k$ and $q$ are positive integers such that $1 \leq q \leq k < m$. 
\end{itemize}

Every agent $i \in N$ ranks the alternatives based on her distances from them, from smallest (most preferable) to largest (least preferable), breaking ties arbitrarily; that is, the pseudometric $d$ induces the ordinal preferences of agent $i$ given by a {\em ranking} $\succ_i$ over the alternatives such that $x \succ_i y$ implies $d(i,x) \leq d(i,y)$. We refer to $\succ_d = (\succ_i)_{i \in N}$ as the {\em preference profile}. 

For any $S \subseteq A$ with $|S| \ge q$, we denote by $\top_{i,q}(S)$ the set of $q$ most preferred alternatives of agent $i$ in $S$. A {\em committee} $C \subseteq A$ is a set of alternatives of size exactly equal to $k$. We define the {\em $q$-cost} of agent $i$ for $C$, denoted $c_{i,q}(C|d)$, to be the distance of $i$ from her {\em $q$-th closest alternative} in $C$: 
$c_{i,q}(C|d) = \max_{x \in \top_{i,q}(C)} d(i,x)$. 
The {\em $q$-social cost} of committee $C$, denoted $\SC_q(C|d)$, is then defined as the total $q$-cost of the agents for $C$: 
$\SC_q(C|d) = \sum_{i \in N} c_{i,q}(C|d)$.

A (randomized) {\em multiwinner voting rule} $f$ takes as input a preference profile $\succ_d$ and outputs a distribution over committees; we say that $f$ is deterministic if the distribution it returns has singleton support. Given $k \in \mathbb{N}$ and $q \in [k]$, the {\em $(k,q)$-distortion} of $f$ is the worst-case ratio, over all possible instances with these parameters, between the (expected) $q$-social cost of the committee chosen by $f$ and the minimum possible $q$-social cost of any committee, i.e.,
\begin{align*}
\sup_{I = (N,A,d,k,q) : |A|=m>k} \frac{\mathbb{E}[\SC_q(f(\succ_d)|d)]}{\min_{C \subset A: |C|=k} \SC_q(C|d)}. 
\end{align*}
Our goal is to design multiwinner rules with as low $(k,q)$-distortion as possible. 

To simplify notation, we drop $q$ and $d$ whenever they are clear from the context. In particular, we will use $c_i(C)$ instead of $c_{i,q}(C|d)$, $\SC(C)$ instead of $\SC_q(C|d)$, and $\top_i(S)$ instead of $\top_{i,q}(S)$. We also denote by $T_i = \top_i(A)$ the set of $q$ alternatives ranked highest by agent $i$. 

\section{Unbounded Distortion With $q \leq k/3$}
We begin with a strong impossibility result for the case where $q \leq k/3$. In particular, we show that every multiwinner voting rule has unbounded distortion, even if it is allowed to use randomization. 

\begin{theorem} \label{thm:q<=k/3-LB}
For $q \leq k/3$, the $(k,q)$-distortion of every (even randomized) multiwinner voting rule is unbounded.
\end{theorem}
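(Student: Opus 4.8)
The plan is to construct a family of instances on which every committee of size $k$ is forced to incur $q$-cost that is large for at least one agent, while the optimal committee has $q$-cost zero for everyone — or at least a ratio that blows up. The key leverage is that with $q \le k/3$, an adversary can partition the alternatives into three "blocks" of $q$ alternatives each (using only $m = \Theta(k)$ alternatives, and padding the committee size with far-away dummy alternatives that no reasonable committee would touch, or simply noting $3q \le k$ so three blocks fit inside a committee). I would place three groups of agents, each group co-located with the alternatives of one block, so that group $t$'s $q$ most preferred alternatives are exactly block $t$. Any committee $C$ of size $k$ can contain all $q$ alternatives of at most... well, it could contain all three blocks if $3q \le k$, so a single instance is not enough — this is exactly why randomization must be defeated by a distribution over instances.

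**The core construction I would use** is a distribution over instances built by a "hiding" trick. Take $3q$ alternatives split into blocks $B_1, B_2, B_3$, and $3q$ agents split into groups $G_1, G_2, G_3$ with $G_t$ near $B_t$. In instance $t$ (for $t \in \{1,2,3\}$), place block $B_t$ at distance $1$ from everyone (making those alternatives "bad") while $B_1, \ldots, B_3$ minus $B_t$ stays at distance $0$ from its corresponding group; crucially, arrange the preference profile so that it is \emph{identical across all three instances} — each agent in $G_t$ ranks $B_t$ on top (so the rule cannot distinguish which instance it faces). For this to work the distances must be set so that in instance $t$, group $G_t$'s preferences still put $B_t$ first even though $B_t$ is far: this needs $B_t$ to be "just barely" farther than alternatives outside $B_t \cup$ (other blocks), which forces me to introduce a fourth set of "very far" alternatives or to scale distances ($0 < \epsilon$ inside, $1$ for the penalized block, but then group $G_t$ would rank the $\epsilon$ alternatives first). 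The clean fix is the standard one: use an additional pool of $k - q$ dummy alternatives that are at distance, say, $2$ from all agents in every instance, so they never enter a good committee, and inside instance $t$ set block $B_t$ at distance $1$ from $G_t$ but keep it at distance $0$ from $G_{t'}$, $t' \ne t$. Then the optimal committee in instance $t$ is $(B_1 \cup B_2 \cup B_3) \setminus B_t$ together with $q$ dummies — wait, that has only $2q$ real alternatives plus $q$ dummies $= 3q = k$ if $k = 3q$; for general $k \ge 3q$ one just adds more zero-distance structure. The optimal social cost in instance $t$ is $0$ (every agent has $q$ alternatives at distance $0$ in that committee), while any fixed committee $C$ must, by a counting/pigeonhole argument over the three instances, incur cost $\ge 1$ for some group in at least one of the instances; averaging, $\E[\SC] \ge \text{const}$ against a zero optimum, giving unbounded (in fact $\infty$) distortion ratio.

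**The main obstacle** — and the step I would spend the most care on — is making the three instances \emph{ordinally indistinguishable} while keeping the optimum at cost $0$ in each. A rule only sees $\succ_d$, so if the three instances share a profile, any randomized rule outputs the same distribution $\mathcal{D}$ over committees in all of them; then I need to show that no single distribution $\mathcal{D}$ can be simultaneously good for all three. Since the optimum is $0$ in each instance, "good" would mean $\E_{C \sim \mathcal{D}}[\SC_t(C)] = 0$, i.e. $\mathcal D$ is supported on committees that are optimal in instance $t$; a committee optimal in instance $t$ must avoid $B_t$ for group $G_t$... but actually it only needs each agent in $G_t$ to have $q$ zero-distance alternatives in $C$, and alternatives in $B_{t'}$ ($t'\ne t$) are at distance $0$ from... hmm, no: I must set it up so that $G_t$'s \emph{only} zero-distance alternatives in instance $t$ are $B_t$ itself together with the dummies, and the dummies are at distance $2$. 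The right setup: in \emph{every} instance, $G_t$ is at distance $0$ only from $B_t$; the difference between instances is which group's "home block" gets relocated — no. Let me instead relocate the \emph{agents}: I will need to think carefully, but the essential point I would nail down is an incompatibility lemma: the sets "committees optimal for instance $1$", "...for instance $2$", "...for instance $3$" have pairwise empty, or jointly empty, intersection, yet the profile is common. Establishing that trio of constraints is mutually unsatisfiable (via the $3q > k$-style counting — each good committee for instance $t$ must contain a specific $q$-subset, and three disjoint $q$-subsets don't fit when... they do fit, so the penalty must instead be that being good for instance $t$ \emph{forbids} containing some $q$-set, and three such forbidden sets cover everything) is the crux, and I would organize the whole proof around getting exactly that combinatorial squeeze to bite, then finish with a one-line averaging argument via Yao-style reasoning over the uniform distribution on the three instances.
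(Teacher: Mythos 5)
Your high-level strategy is the right one and matches the paper's in spirit: fix a single preference profile, observe that a (randomized) rule then commits to one distribution over committees, and show that no single distribution can be good against every metric consistent with that profile, where ``good'' must mean expected cost $0$ against a zero-cost optimum. However, the concrete construction you propose does not close, and the step you yourself flag as ``the crux'' --- the incompatibility lemma --- is exactly the missing idea, and with your parameters it is false. With only three blocks $B_1,B_2,B_3$ of $q$ alternatives each (plus far-away dummies), the total number of ``relevant'' alternatives is $3q\le k$, so a committee containing all of $B_1\cup B_2\cup B_3$ is always feasible. Moreover, your hiding trick runs into the consistency constraint you noticed: if in instance $t$ the block $B_t$ sits at distance $1$ from $G_t$ while the shared profile has $G_t$ ranking $B_t$ on top, then in instance $t$ every alternative is at distance $\ge 1$ from $G_t$, so the optimum is not $0$ and you lose unboundedness; if instead you give $G_t$ some other zero-distance set $D_t$ ranked above $B_t$, then consistency forces $D_t$ to be at distance $0$ from $G_t$ in \emph{all} instances, and the fixed committee $D_1\cup D_2\cup D_3$ (size $3q\le k$, padded with dummies) has social cost $0$ in every instance --- so the rule that always outputs it defeats your whole family, and no averaging or Yao-style argument can rescue a construction in which one committee is simultaneously optimal everywhere.

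What is needed --- and what the paper does --- is a counting argument that makes the ``safe for everyone'' set too large to fit in a committee: use $L=\lfloor k/q\rfloor+1\ge 4$ blocks of size $q$, so $m=Lq>k$ and every committee must omit a full block's worth of alternatives. The blocks are split asymmetrically into an $X$-side and a $Y$-side, each agent is tied to her own block, and the two consistent metrics are chosen \emph{after} inspecting the rule's distribution: if some $Y$-alternative is ever omitted, a metric collapsing all of $V\cup X$ to one point (and spreading the $Y$-blocks out) makes the optimum $0$ while the rule pays; if all of $Y$ is always included, then some $X$-block is always broken, and the mirror metric does the job. The condition $q\le k/3$ enters precisely to guarantee that the zero-cost committee in each case (all of one side plus at least $q$ alternatives of the other) actually fits, i.e., $k-\lceil L/2\rceil q\ge q$ and $k-\lfloor L/2\rfloor q\ge q$. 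Your proposal never reaches this $\lfloor k/q\rfloor+1$-block counting, so as written it has a genuine gap rather than a repairable detail; repairing it essentially reproduces the paper's construction.
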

\begin{proof}
Fix $k$ and $q$ such that $q \leq k/3$, and a mutliwinner voting rule $f$. Let $L = \lfloor k/q \rfloor + 1 \ge 4$. We consider an instance with $n=L$ agents, partitioned into two sets: $V=\set{v_1,\dots,v_{\lfloor L/2 \rfloor}}$ and $U=\{u_1,\ldots,u_{\lceil L/2 \rceil}\}$. There are $m=Lq$ alternatives, partitioned into $L$ sets $X_1, \ldots, X_{\lfloor L/2 \rfloor},Y_1, \ldots, Y_{\lceil L/2 \rceil}$ of size $q$ each.\footnote{We use an instance with $n \leq m$. Note, however, that the lower bound continues to hold even if one assumes $n \ge m$. We can simply create $t$ copies of each agent, for an appropriately large $t \in \mathbb{N}$.} Let $X = \bigcup_{\ell=1}^{\lfloor L/2 \rfloor} X_\ell$ and $Y = \bigcup_{\ell= 1}^{\lceil L/2 \rceil} Y_\ell$. 

Consider any preference profile such that:
\begin{itemize}
\item
Every agent in $V$ ranks the alternatives in $X$ higher than those in $Y$. 

\item 
Every agent in $U$ ranks the alternatives in $Y$ higher than those in $X$.

\item 
For $\ell \in [\floor{L/2}]$, every agent ranks the alternatives in $X_\ell$ as well as those in $Y_\ell$ arbitrarily. 

\item 
For $\ell \in [\floor{L/2}]$, agent $v_\ell \in V$ ranks the alternatives in $X_i$ higher than those in $X_j$ whenever $|\ell-i| < |\ell-j|$, for $i,j \in [\floor{L/2}]$; agent $v_\ell$ ranks the sets of $Y$ in the order $Y_1,\ldots,Y_{\ceil{L/2}}$ from highest to lowest.

\item 
For $\ell \in [\ceil{L/2}]$, agent $u_\ell \in U$ ranks the alternatives in $Y_i$ higher than those in $Y_j$ whenever $|\ell-i| < |\ell-j|$, for $i,j \in [\ceil{L/2}]$; agent $u_\ell$ ranks the sets of $X$ in the order $X_{\floor{L/2}},\ldots,X_1$ from highest to lowest.
\end{itemize}
Table~\ref{tab:q<=k/3-LB} presents an example preference profile for $k=8$ and $q=2$; Figure~\ref{fig:q<=k/3-LB} depicts the possible underlying metric spaces used in the two cases below for this example instance.

\begin{table}[t]
\centering
\begin{tabular}{c|c}
agent & ranking \\ \hline
$v_1$  &  $X_1 \succ X_2 \succ Y_1 \succ Y_2 \succ Y_3$ \\ 
$v_2$  &  $X_2 \succ X_1 \succ Y_1 \succ Y_2 \succ Y_3$ \\ \hline
$u_1$  &  $Y_1 \succ Y_2 \succ Y_3 \succ X_2 \succ X_1$  \\
$u_2$  &  $Y_2 \succ Y_1 \succ Y_3 \succ X_2 \succ X_1$  \\
$u_3$  &  $Y_3 \succ Y_2 \succ Y_1 \succ X_2 \succ X_1$ 
\end{tabular}
\caption{An example of the preference profile used in the proof of Theorem~\ref{thm:q<=k/3-LB}, when $k=8$ and $q=2$. Here, we have $L = \lfloor 8/2 \rfloor + 1 = 5$, $n=L=5$, and $m = Lq = 10$. The alternatives are partitioned into $5$ sets $X_1,X_2, Y_1, Y_2, Y_3$ of size $2$ each. We also have that $X=\{X_1,X_2\}$ and $Y=\{Y_1,Y_2,Y_3\}$. As an example, $u_2$ ranks the alternatives of $Y$ higher than those of $X$, i.e., $\{Y_1,Y_2,Y_3\} \succ \{X_1,X_2\}$. Within $Y$, the sets therein are ranked based on their index distance from $2$, e.g., $Y_2 \succ Y_1 \succ Y_3$. Within $X$, the order is fixed so that $X_2 \succ X_1$. The two alternatives in every subset are ranked arbitrarily.
}
\label{tab:q<=k/3-LB}
\end{table}

Because $m = Lq = (\floor{k/q}+1)\cdot q > k$, not all alternatives can be included in the committee. We switch between the following two cases.

\begin{figure}[t]
\tikzset{every picture/.style={line width=0.75pt}} 
\centering
\begin{subfigure}[t]{0.45\linewidth}
\centering
\begin{tikzpicture}[x=0.7pt,y=0.7pt,yscale=-1,xscale=1]
\draw [line width=0.75]  (0,0) -- (280,0) ;
\filldraw (20,0) circle (3pt);
\filldraw (100,0) circle (3pt);
\filldraw (180,0) circle (3pt);
\filldraw (260,0) circle (3pt);

\draw (20,-40) node [inner sep=0.75pt]  [font=\small]  {$\{v_1, X_1\}$};
\draw (20,-20) node [inner sep=0.75pt]  [font=\small]  {$\{v_2, X_2\}$};
\draw (20,20) node [inner sep=0.75pt]  [font=\small]  {$0$};

\draw (100,-20) node [inner sep=0.75pt]  [font=\small]  {$\{u_1, Y_1\}$};
\draw (100,20) node [inner sep=0.75pt]  [font=\small]  {$4$};

\draw (180,-20) node [inner sep=0.75pt]  [font=\small]  {$\{u_2 , Y_2\}$};
\draw (180,20) node [inner sep=0.75pt]  [font=\small]  {$5$};

\draw (260,-20) node [inner sep=0.75pt]  [font=\small]  {$\{u_3, Y_3\}$};
\draw (260,20) node [inner sep=0.75pt]  [font=\small]  {$6$};
\end{tikzpicture}
\caption{The metric space in Case 1.}
\end{subfigure}
\ \ \ \ \ 
\begin{subfigure}[t]{0.45\linewidth}
\centering
\begin{tikzpicture}[x=0.7pt,y=0.7pt,yscale=-1,xscale=1]
\draw [line width=0.75]  (0,0) -- (200,0) ;
\filldraw (20,0) circle (3pt);
\filldraw (100,0) circle (3pt);
\filldraw (180,0) circle (3pt);

\draw (20,-20) node [inner sep=0.75pt]  [font=\small]  {$\{v_1, X_1\}$};
\draw (20,20) node [inner sep=0.75pt]  [font=\small]  {$-4$};

\draw (100,-20) node [inner sep=0.75pt]  [font=\small]  {$\{v_2, X_2\}$};
\draw (100,20) node [inner sep=0.75pt]  [font=\small]  {$-3$};

\draw (180,-60) node [inner sep=0.75pt]  [font=\small]  {$\{u_1, Y_1\}$};
\draw (180,-40) node [inner sep=0.75pt]  [font=\small]  {$\{u_2, Y_2\}$};
\draw (180,-20) node [inner sep=0.75pt]  [font=\small]  {$\{u_3, Y_3\}$};
\draw (180,20) node [inner sep=0.75pt]  [font=\small]  {$0$};
\end{tikzpicture}
\caption{The metric space in Case 2.}
\end{subfigure}
\caption{The two metrics used in the proof of Theorem~\ref{thm:q<=k/3-LB} when $k=8$ and $q=2$. Both are consistent with the ordinal preferences presented in Table~\ref{tab:q<=k/3-LB}. If the $6$ alternatives of $Y = Y_1 \cup Y_2 \cup Y_3$ are not all included in the committee with positive probability, the expected $q$-cost of some agent of $U=\{u_1,u_2,u_3\}$ is positive in the first metric, thus leading to unbounded distortion, as the committee that includes the $6$ alternatives of $Y$ and $2$ alternatives of $X=X_1 \cup X_2$ has $q$-social cost $0$. If, on the other hand, all the alternatives of $Y$ are included in the committee, not all $4$ alternatives of $X$ can be included in the committee, and thus some agent in $V=\{v_1,v_2\}$ has $q$-cost at least $1$ in the second metric in any possible scenario; this again leads to unbounded distortion as the committee that includes the $4$ alternatives of $X$ and $4$ alternatives of $Y$ has $q$-social cost $0$.}
\label{fig:q<=k/3-LB}
\end{figure}
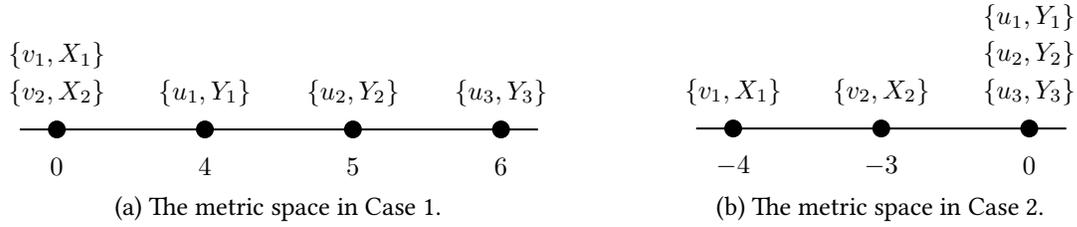

\paragraph{Case 1: Some alternative in $Y$ is not included in the committee with a positive probability.} \ \\ 
Suppose that an alternative in $Y_{\ell^*}$ is not included in the committee with a positive probability, for some $\ell^* \in [\ceil{L/2}]$. Consider the following one-dimensional Euclidean metric, which is consistent with the ordinal preferences of the agents defined above:
\begin{itemize}
\item 
All agents in $V$ and all alternatives in $X$ are located at $0$.

\item 
For every $\ell \in [\ceil{L/2}]$, agent $u_\ell \in U$ and the alternatives in $Y_\ell$ are located at $\lceil L/2 \rceil + \ell$.
\end{itemize}

Since some alternative in $Y_{\ell^*}$ is not included in the committee with a positive probability, the expected $q$-cost of agent $u_{\ell^*}$ is positive, and thus the expected $q$-social cost under $f$ is also positive. However, it is possible to achieve $q$-social cost $0$ by including in the committee all the alternatives of $Y$ and an arbitrary subset of $k - \lceil L/2 \rceil q \geq q$ alternatives from $X$; here, the inequality follows because $q \le k/3$. So, the distortion of $f$ is unbounded in this case.

\paragraph{Case 2: Every alternative in $Y$ is included in the committee with probability $1$.} \ \\
Consider the following metric, which is consistent with the ordinal preferences of the agents defined above:
\begin{itemize}
\item
For $\ell \in [\lfloor L/2 \rfloor]$, agent $v_\ell$ and the alternative in $X_\ell$ are located at $-L + \ell$.

\item 
All agents of $U$ and all alternatives in $Y$ are located at $0$.
\end{itemize}

Since $f$ chooses a committee that includes all of $Y$ with probability $1$, it excludes at least one alternative of $X$ with probability $1$. Whenever an alternative in $X_\ell$ is excluded, note that agent $v_\ell$ has $q$-cost at least $1$. Hence, the expected $q$-social cost is at least $1$. However, it is possible to achieve $q$-social cost $0$ by choosing the committee containing all the alternatives of $X$ and an arbitrary subset of $k-\lfloor L/2 \rfloor q \geq q$ alternatives from $Y$. So, the distortion of $f$ is unbounded in this case as well. 
\end{proof}

\section{Linear Distortion With $k/3 < q \leq k/2$} \label{sec:k/3<q<=k/2}
We now turn our attention to the case of $q \in (k/3, k/2]$. In this case, the best possible distortion turns out to be bounded, but linear in the number of agents, which could be very large.

We propose a novel deterministic multiwinner voting rule, called {\sc PolarOpposites} (see  \Cref{alg:2-agentFavorite}), which runs in polynomial time and achieves a distortion of $O(n)$. Recall that, for a fixed value of $q$, $\top_i(S)$ is the set of the $q$ most favorite alternatives of agent $i$ in $S$, and $T_i = \top_i(A)$. {\sc PolarOpposites} is relatively straightforward: we choose an agent $i$ arbitrarily, and another agent $j$ whose $T_j$ has the highest cost\footnote{Here, we use the fact that in our model, we can compare two sets of alternatives in terms of their cost to agent $i$ using only agent $i$'s preference ranking over individual alternatives.} for agent $i$; then we output any committee that includes $T_i \cup T_j$. However, the analysis of the distortion upper bound of our rule is intricate. 

\begin{algorithm}[ht]
\SetAlgoLined
Sort the agents in $N = [n]$ in a non-decreasing order of their cost for $O$ such that $c_i(O) \le c_j(O)$ for all $i < j$ \;
$S \leftarrow \varnothing$\;
\For{$i=1,\ldots,n$}{
    \If{$\top_i(O) \cap \top_j(O) = \varnothing$ for all $j \in S$ }{
        $S \leftarrow S \cup \{i\}$\;
    }
}
\caption{Construction of $S$ in Lemma~\ref{lem:best-of-some-voters}}
\label{alg:S-construction}
\end{algorithm}

\begin{algorithm}[t]
\SetAlgoLined
Choose an arbitrary agent $i \in N$\;
Choose an agent $j \in \arg\max_{\ell \in N \setminus \{i\}} c_i( T_\ell )$\;
Output an arbitrary committee $W \supseteq T_i \cup T_j$\;
\caption{\sc PolarOpposites}
\label{alg:2-agentFavorite}
\end{algorithm}

Before we proceed with bounding the distortion of our rule, we present a structural lemma, which holds for all possible values of $k$ and $q$, and will be extremely useful in the proof of the bound.  

\begin{lemma}\label{lem:best-of-some-voters}
Consider any instance $I=(N,A,d,k,q)$ and let $O \in \argmin_{C: |C|=k} \SC(C)$ be an optimal committee.
There exists a subset of agents $S$ with $|S| \le \floor{k/q}$ such that for every agent $i \in N$ there exists an agent $j \in S$ with $\top_i(O) \cap \top_j(O) \neq \varnothing$ and $c_j(O) \le c_i(O)$. In addition, for every agent $i \in N$ and committee $C \supseteq \bigcup_{j \in S}\ \top_j(A)$, we have $c_i(C) \le 3 \cdot c_i(O)$. 
\end{lemma}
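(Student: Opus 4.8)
The plan is to prove the two claims in Lemma~\ref{lem:best-of-some-voters} separately: first construct $S$ via the greedy procedure in Algorithm~\ref{alg:S-construction}, then establish the bound $|S| \le \floor{k/q}$ and the covering property, and finally bootstrap the triangle inequality to get the $3$-approximation.

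**Construction and size of $S$.** I would run Algorithm~\ref{alg:S-construction}: sort the agents so that $c_1(O) \le c_2(O) \le \dots \le c_n(O)$, and greedily add agent $i$ to $S$ whenever $\top_i(O)$ is disjoint from $\top_j(O)$ for every $j$ already in $S$. By construction, the sets $\set{\top_j(O) : j \in S}$ are pairwise disjoint, each has size exactly $q$, and all lie inside $O$ which has size $k$; hence $q \cdot |S| \le k$, i.e. $|S| \le \floor{k/q}$. The covering property is immediate from the greedy rule: if $i \notin S$, the reason is that there is some $j \in S$ with $\top_i(O) \cap \top_j(O) \neq \varnothing$, and since the agents are processed in non-decreasing order of $c(\cdot)(O)$, this $j$ was added before $i$ was considered, so $c_j(O) \le c_i(O)$; and if $i \in S$ we may take $j = i$.

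**The approximation bound.** Fix an agent $i \in N$ and a committee $C \supseteq \bigcup_{j \in S} \top_j(A)$. Let $j \in S$ be the agent guaranteed above, so there is an alternative $x \in \top_i(O) \cap \top_j(O)$ with $c_j(O) \le c_i(O)$. Since $x \in \top_i(O)$, we have $d(i,x) \le c_i(O)$, and since $x \in \top_j(O)$, we have $d(j,x) \le c_j(O) \le c_i(O)$. Now $T_j = \top_j(A) \subseteq C$, so all $q$ of agent $j$'s globally-favorite alternatives lie in $C$; hence $c_i(C) = \max_{y \in \top_i(C)} d(i,y)$ is at most $d(i,y')$ where $y'$ is the alternative in $T_j$ that is farthest from $i$ — more precisely, because $|T_j| = q$ and $T_j \subseteq C$, the $q$-th closest alternative of $i$ in $C$ is no farther than the farthest element of $T_j$, so $c_i(C) \le \max_{y \in T_j} d(i,y)$. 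For any such $y \in T_j$, apply the agent-alternative form of the triangle inequality stated in the preliminaries with the pair of agents $i,j$ and the pair of alternatives $y, x$: $d(i,y) \le d(i,x) + d(j,x) + d(j,y)$. Finally $d(j,y) \le c_j(O) \le c_i(O)$ because $y \in T_j = \top_j(A)$ is among $j$'s $q$ favorite alternatives overall, so $d(j,y)$ is at most $j$'s $q$-th smallest distance, which is $\le c_j(O)$ (the $q$-th closest alternative within the subset $O$ is at least as far). Combining, $d(i,y) \le c_i(O) + c_i(O) + c_i(O) = 3 c_i(O)$, hence $c_i(C) \le 3 c_i(O)$, as desired.

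**Main obstacle.** The delicate point is the chain of inequalities bounding $c_i(C)$ by $\max_{y \in T_j} d(i,y)$ and then each $d(i,y)$ by $3c_i(O)$: one must be careful that $T_j = \top_j(A)$ is the \emph{global} top-$q$ set of $j$ (which is contained in $C$ by hypothesis on $C$), whereas $\top_j(O)$ is $j$'s top-$q$ within $O$, so $d(j,y) \le d(j,x)$ need not hold for $y \in T_j$ versus $x \in \top_j(O)$; instead one uses that \emph{both} $d(j,x)$ and $d(j,y)$ are at most $c_j(O)$ — the former since $x \in \top_j(O)$, the latter since $y \in T_j$ means $y$ is one of $j$'s $q$ closest alternatives overall, so $d(j,y)$ is at most $j$'s $q$-th smallest distance among \emph{all} alternatives, which is $\le$ $j$'s $q$-th smallest distance among those in $O$, i.e. $\le c_j(O)$. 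Making this last comparison airtight (global top-$q$ distances are dominated by top-$q$-within-a-subset distances) and then feeding $c_j(O) \le c_i(O)$ through is the crux; everything else is bookkeeping.
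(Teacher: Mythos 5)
Your proposal is correct and takes essentially the same route as the paper's proof: the same greedy construction of $S$ in non-decreasing order of $c_\cdot(O)$ with the disjointness counting giving $|S|\le\floor{k/q}$, and the same triangle-inequality path through an alternative $x\in\top_i(O)\cap\top_j(O)$ and the farthest-from-$i$ element of $T_j$, using $d(j,x)\le c_j(O)$ and $d(j,y)\le c_j(O)\le c_i(O)$ (the paper picks this $y$ as agent $i$'s $q$-th favorite alternative in $T_j$, which is exactly your $\max_{y\in T_j}d(i,y)$ bound). The ``crux'' you flag --- that $j$'s $q$-th smallest distance over all of $A$ is dominated by that over the subset $O$ --- is handled correctly and matches the paper's observation that $y\in\top_j(A)$ implies $d(j,y)\le c_j(O)$.
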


\begin{proof}
We construct the set $S$ using Algorithm~\ref{alg:S-construction}. Note that because we are only interested in arguing the existence of $S$, we can assume access to the underlying costs in order to construct $S$. 

By construction, $S$ is such that for every agent $i \in N$ either $i \in S$ (in which case the condition in the statement of the lemma holds for $j=i$), or there exists $j \in N \setminus \{i\}$ that is considered before $i$ in Algorithm~\ref{alg:S-construction} (i.e., $c_j(O) \leq c_i(O)$) and $\top_i(O) \cap \top_j(O) \neq \varnothing$. So, it remains to show that $|S| \leq \lfloor k/q \rfloor$. 

Observe at any point during the algorithm, we maintain the invariant $\top_i(O) \cap \top_j(O) = \varnothing$ for all distinct $i,j \in S$. Since $|O| = k$ and $\top_i(O)$ for each $i \in S$ consists of $q$ unique alternatives from $O$, we always have $|S| \le \floor{k/q}$. 

For the second claim, consider any committee $C \supseteq \bigcup_{i \in S}\ \top_i(A)$. Clearly, $c_i(C) \leq c_i(O)$ for every $i \in S$. By the property of $S$ established above, for any agent $i \in N \setminus S$, there exist $j \in S$ and $x \in C$ such that $c_j(O) \le c_i(O)$ and $x \in \top_i(O) \cap \top_j(O)$. Let $y \in C$ be the $q$-th favorite alternative of $i$ in $\top_j(A)$. We make the following simple observations:
\begin{itemize}
    \item Since $x \in \top_i(O)$, it holds that $d(i,x) \leq c_i(O)$. 
    \item By the choice of $j$, and since $x \in \top_j(O)$ and $y \in \top_j(A)$, $d(j,x) \leq c_j(O) \leq c_i(O)$ and $d(j,y) \leq c_j(O) \leq c_i(O)$.
\end{itemize}
By the triangle inequality and the above observations, we have 
$$c_i(C) \leq d(i,y) \leq d(i,x) + d(j,x) + d(j,y) \leq 3 \cdot c_i(O).$$
This completes the proof.
\end{proof}

We are now ready to bound the distortion of the {\sc PolarOpposites} rule. 

\begin{theorem} \label{thm:k/3<=q<=k/2-upper}
For any $q \in (k/3,k/2]$, the $(k,q)$-distortion of {\sc PolarOpposites} is $O(n)$. 
\end{theorem}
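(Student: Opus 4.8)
The plan is to show that the committee $W \supseteq T_i \cup T_j$ returned by {\sc PolarOpposites} satisfies $\SC(W) \le O(n) \cdot \SC(O)$, where $O$ is an optimal committee. The crucial structural fact, enabled by the regime $q > k/3$, is that $|T_i \cup T_j| \le 2q \le 2k/3 < k$, so such a committee exists, and moreover $T_i \cup T_j$ leaves room for $k - |T_i \cup T_j| \ge k/3 \ge q$ more alternatives. The rule guarantees two things: every agent has in $W$ all $q$ of agent $i$'s favorites (the set $T_i$), and all $q$ of agent $j$'s favorites (the set $T_j$), where $j$ is chosen to maximize $c_i(T_\ell)$ over $\ell \ne i$. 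The first step is to bound, for an arbitrary agent $\ell$, the quantity $c_\ell(W)$ in terms of $c_\ell(O)$ using the triangle inequality, routing through either $T_i$ or $T_j$; the second step is to sum over all $\ell$ and show the total is $O(n)$ times $\SC(O)$.

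First I would bound $c_i(W)$: since $T_i \subseteq W$, we have $c_i(W) \le c_i(T_i) = c_i(A) \le c_i(O)$ trivially (agent $i$'s $q$ globally-closest alternatives are no farther than her $q$ closest within $O$). Next, for any other agent $\ell$, I want to route $\ell$'s cost through agent $i$. Take any alternative $x \in T_i$ (agent $i$'s favorites, all in $W$) and any alternative $y \in T_\ell$ (agent $\ell$'s favorites). By the triangle-inequality formulation in the preliminaries, $d(\ell, x) \le d(\ell, y) + d(i, y) + d(i, x) \le c_\ell(O) + d(i,y) + c_i(O)$, since $d(\ell,y) \le d(\ell, y') \le c_\ell(O)$ for $y \in T_\ell$ (actually $y \in T_\ell$ is among $\ell$'s closest, so $d(\ell,y) \le c_\ell(O)$) and $d(i,x) \le c_i(O)$ for $x \in T_i$. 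To make this useful I need $d(i,y)$ small: here is where the choice of $j$ matters. Since $j$ maximizes $c_i(T_\ell)$, we have $c_i(T_j) \ge c_i(T_\ell)$ for every $\ell$, so $d(i,y) \le c_i(T_\ell) \le c_i(T_j)$ for $y \in T_\ell$. Combining, every agent $\ell$ has $q$ alternatives of $W$ (namely those in $T_i$) within distance $c_\ell(O) + c_i(O) + c_i(T_j)$, hence $c_\ell(W) \le c_\ell(O) + c_i(O) + c_i(T_j)$.

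The main obstacle — and where I expect the real work to be — is bounding $c_i(T_j)$ itself by $O(1)$ times terms that sum to $O(n)\SC(O)$. Note $c_i(T_j) = \max_{x \in T_j} d(i,x)$, where $T_j$ are agent $j$'s $q$ closest alternatives overall, so $d(j,x) \le c_j(A) \le c_j(O) \le c_\ell(O)$ for appropriate reference agents; routing $d(i,x) \le d(i,w) + d(j',w) + d(j',x)$ through agent $j$ and a shared alternative gives $c_i(T_j) \le c_i(O) + c_j(O) + (\text{something}) $. I would look for a common alternative in $O$ between agent $i$ and agent $j$, or failing that, use \Cref{lem:best-of-some-voters}: since $|S| \le \floor{k/q} \le 2$ in this regime (because $q > k/3$ forces $\floor{k/q} \le 2$), the set $S$ has at most two agents, and $T_i$ together with the favorites of those one-or-two agents in $S$ already forces $c_\ell(W') \le 3 c_\ell(O)$ for a related committee $W'$. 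The cleanest route is probably: show $c_i(T_j) \le 3 \max_\ell c_\ell(O) \le 3\,\SC(O)$ (a very loose bound, but it only costs a factor of $n$), so that $c_\ell(W) \le c_\ell(O) + c_i(O) + 3\,\SC(O) \le 5\,\SC(O)$ for each of the $n$ agents, giving $\SC(W) \le 5n\,\SC(O)$ and hence distortion $O(n)$. The delicate point to get right is justifying $c_i(T_j) = O(1) \cdot \max_\ell c_\ell(O)$; I would derive it by picking the agent $j^\star$ attaining $c_{j^\star}(O) = \min_\ell c_\ell(O)$, noting $T_{j^\star}$ and $T_j$ both lie "close" to their owners, and using triangle inequality twice (once to get from $i$ to $j^\star$ via a shared optimal-committee alternative guaranteed by \Cref{lem:best-of-some-voters}, once to get from $j^\star$ to the relevant alternative of $T_j$), absorbing everything into a constant times $\max_\ell c_\ell(O)$.
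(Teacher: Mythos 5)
Your setup and first routing step are essentially the paper's: for any agent $\ell$, going from $\ell$ to an alternative of $T_i \subseteq W$ via some $y \in T_\ell$ and agent $i$ gives $c_\ell(W) \le c_\ell(O) + c_i(O) + c_i(T_j)$, where the choice of $j$ is used to control $d(i,y) \le c_i(T_\ell) \le c_i(T_j)$. When $\top_i(O) \cap \top_j(O) \neq \varnothing$ you can finish exactly as you sketch: a shared alternative $x$ yields $c_i(T_j) \le d(i,x) + d(j,x) + c_j(T_j) \le c_i(O) + 2c_j(O) \le 3\,\SC(O)$, and summing over agents gives $O(n)\cdot\SC(O)$. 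This matches the paper's Case 1 (the paper routes through $T_j$ rather than $T_i$, but the calculation is the same).

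The gap is precisely at the ``delicate point'' you flag: the claim $c_i(T_j) \le 3\max_\ell c_\ell(O)$ is \emph{false} when $\top_i(O) \cap \top_j(O) = \varnothing$, and no chain through the set $S$ of \Cref{lem:best-of-some-voters} rescues a uniform bound on $c_i(T_j)$. Concretely, take $k = 2q$, place agent $i$ (and possibly many other agents) together with $q$ alternatives at the origin, place agent $j$ together with $q$ alternatives at distance $D$, and add one extra alternative far away so that $m > k$. The committee $O$ consisting of the $2q$ clustered alternatives has $c_\ell(O) = 0$ for every agent $\ell$, so $\max_\ell c_\ell(O) = \SC(O) = 0$, yet $c_i(T_j) = D > 0$. ({\sc PolarOpposites} is still fine on this instance, since $W \supseteq T_i \cup T_j = O$; what fails is your bound $c_\ell(W) \le c_\ell(O) + c_i(O) + c_i(T_j)$, which becomes vacuous because you force \emph{every} agent to route through $T_i$.) The repair, which is what the paper's Case 2 does, is to never bound $c_i(T_j)$ in the disjoint case. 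Instead, using \Cref{lem:best-of-some-voters} with $|S| \le \lfloor k/q \rfloor = 2$, one shows $S = \{g(i), g(j)\}$ with $g(i) \neq g(j)$ (three pairwise disjoint top-$q$ subsets of $O$ would require $3q > k$ alternatives), and then each agent $\ell$ is routed through whichever of $T_i$ or $T_j$ shares a representative $g(\ell) \in S$ with her, passing through two shared alternatives of $O$. Since both $T_i$ and $T_j$ lie in $W$, this per-agent choice of route is always available and yields $c_\ell(W) \le c_\ell(O) + 4\,\SC(O)$ in all cases. Without this per-agent case split, your argument does not go through.
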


\begin{proof}
Let $I=(N,A,d,k,q)$ be an instance with $q \in (k/3, k/2]$. Let $i$ and $j$ be the agents chosen by {\sc PolarOpposites} on $I$, $W \supseteq T_i \cup T_j$ be the committee returned by it, and $O \in \argmin_{C: |C|=k} \SC(C)$ be an optimal committee for $I$. We will show that $c_\ell(W) \leq c_\ell(O) + 4 \cdot \SC(O)$ for every agent $\ell \in N$. Then, by summing over all agents, we will obtain that $\SC(W) \le (4n+1) \cdot \SC(O)$, thus implying an upper bound of $4n+1$ on the distortion of {\sc PolarOpposites}. 
We distinguish between the following two cases.

\paragraph{Case 1: $\top_i(O) \cap \top_j(O) \neq \varnothing$.} Let $x \in \top_i(O) \cap \top_j(O)$ be an alternative that both $i$ and $j$ consider to be among their top $q$ alternatives in the optimal committee $O$. For any agent $\ell \in N$, let 
\begin{itemize}
    \item $y_\ell = \arg\max_{z \in T_\ell} d(\ell,z)$ be the $q$-th overall favorite alternative of $\ell$, and
    \item $y_{\ell j} = \arg\max_{z \in T_j} d(\ell,z)$ be the $q$-th favorite alternative of $\ell$ among those in $T_j$.
\end{itemize}
Since $T_j \subset W$, by the definition of $y_{\ell j}$, we have 
$c_\ell(W) \leq c_\ell(T_j) = d(\ell,y_{\ell j})$. 
Combining this with the triangle inequality, we obtain 
\begin{align*}
c_\ell(W) 
&\le d(\ell,y_{\ell j}) \\
&\leq d(\ell, y_\ell) + d(i,y_\ell) + d(i,x) + d(j,x) + d(j,y_{\ell j}).
\end{align*}
For any agent $v$, note that $c_v(T_v) \leq c_v(O)$, and
for any agent $u$ and alternative $y \in T_v$, we also have $d(u,y) \leq c_u (T_v)$. Combining these with the definitions of $j$, $y_\ell$, $x$, and $y_{\ell j}$, we can now bound each of the terms in the expression above as follows:
\begin{itemize}
    \item $d(\ell,y_\ell) = c_\ell(T_\ell) \leq c_\ell(O)$;
    \item $d(i,y_\ell) \leq c_i( T_\ell ) \leq c_i( T_j )$;
    \item $d(i,x) \leq c_i(O)$;
    \item $d(j,x) \leq c_j(O)$;
    \item $d(j, y_{\ell j}) \leq c_j(T_j) \leq c_j(O)$.
\end{itemize}
Substituting these, we get
\begin{align*}
c_\ell(W) \leq c_\ell(O) + c_i(T_j) + c_i(O) + 2c_j(O).
\end{align*}
We can bound the term $c_i(T_j)$ using the definition of alternative $y_{ij}$, the triangle inequality, and some of the above observations, as follows:
\begin{align*}
c_i(T_j) 
&= d(i, y_{ij}) \\
&\leq d(i,x) + d(j,x) + d(j, y_{ij}) \\
&\leq c_i(O) + c_j(O) + c_j(T_j) \\
&\leq c_i(O) + 2 c_j(O).
\end{align*}
So, combining everything, and using the fact that $\SC(O) \geq c_i(O) + c_j(O)$, we have that 
\begin{align*} 
c_\ell(W) 
&\leq c_\ell(O) + 2c_i(O) + 4c_j(O) \\
&\leq c_\ell(O) + 4 \cdot \SC(O),
\end{align*}
as desired.

\paragraph{Case 2: $\top_i(O) \cap \top_j(O) = \varnothing$.} 
Consider the set $S$ that is guaranteed to exist by Lemma~\ref{lem:best-of-some-voters}. Since $q \in (k/3, k/2]$, we have that $\lfloor k/q \rfloor = 2$, and hence $|S| \leq 2$.

If $|S|=1$, then every agent $\ell \in N$ is mapped to the single agent $v \in S$, and it holds that $\top_\ell(O) \cap \top_v(O) \neq \varnothing$. 
If $|S|=2$, we claim that there is a function $g: N \to S$ such that for every agent $\ell \in N$, it holds that $\top_\ell(O) \cap \top_{g(\ell)}(O) \neq \varnothing$ and $S=\set{g(i),g(j)}$. Lemma~\ref{lem:best-of-some-voters} already guarantees a function $g$ meeting the first condition. The only case in which the second condition cannot be met is if there exists $v \in S$ such that $\top_i(O) \cap \top_v(O) = \top_j(O) \cap \top_v(O) = \varnothing$. However, this, along with $\top_i(O) \cap \top_j(O) = \varnothing$ implies that $\top_i(O)$, $\top_j(O)$, and $\top_v(O)$ are disjoint subsets of $O$ of size $q$ each. This is a contradiction since $|O| = k < 3q$. 

Now, consider any agent $\ell \in N$, and suppose that $g(\ell) = g(j) = v \in S$; the case $g(\ell)=g(i)$ if $|S|=2$ is similar. 
By the properties of $S$, there exist alternatives $x \in \top_\ell(O) \cap \top_v(O)$ and $z \in \top_j(O) \cap \top_v(O)$. 
As in case 1, let $y_{\ell j} \in T_j$ be the $q$-th favorite alternative of $\ell$ in $T_j$; so, $c_\ell(T_j) = d(\ell, y_{\ell j})$. 
By the fact that $T_j \subset W$, and using the triangle inequality, we obtain
\begin{align*}
c_\ell(W) &\leq c_\ell(T_j) 
= d(\ell, y_{\ell j}) \\ 
&\le d(\ell,x) + d(v,x) + d(v,z) + d(j,z) + d(j,y_{\ell j}).
\end{align*}
By the definitions of $x$, $z$ and $y_{\ell j}$, we have
$d(\ell,x) \leq c_\ell(O)$,
$d(v,x) \leq c_v(O)$,
$d(v,z) \leq c_v(O)$,
$d(j,z) \leq c_j(O)$, and
$d(j, y_{\ell j}) \leq c_j(T_j) \leq c_j(O)$.
Combined with the fact that $\SC(O) \geq c_v(O)$ and $\SC(O) \ge c_j(O)$, we get 
\begin{align*}
c_\ell(W) 
&\leq c_\ell(O) + 2c_v(O) + 2c_j(O) \\
&\le c_\ell(O) + 4\cdot \SC(O),
\end{align*}
as desired.
\end{proof}

We conclude this section by showing a matching lower bound of $\Omega(n)$ on the distortion of any (even randomized) multiwinner voting rule. Hence, when $q \in (k/3, k/2]$,  {\sc PolarOpposites} is the asymptotically best possible rule in terms of distortion, even among randomized rules. 

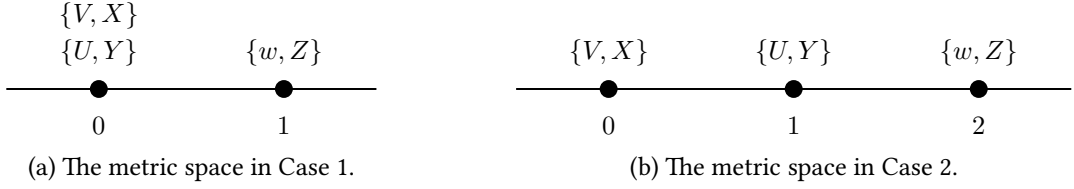
\begin{figure}[t]
\tikzset{every picture/.style={line width=0.75pt}} 
\begin{subfigure}[t]{0.45\linewidth}
\centering
\begin{tikzpicture}[x=0.7pt,y=0.7pt,yscale=-1,xscale=1]
\draw [line width=0.75]  (0,0) -- (200,0) ;
\filldraw (50,0) circle (3pt);
\filldraw (150,0) circle (3pt);

\draw (50,-40) node [inner sep=0.75pt]  [font=\small]  {$\{V, X\}$};
\draw (50,-20) node [inner sep=0.75pt]  [font=\small]  {$\{U, Y\}$};
\draw (50,20) node [inner sep=0.75pt]  [font=\small]  {$0$};

\draw (150,-20) node [inner sep=0.75pt]  [font=\small]  {$\{w, Z\}$};
\draw (150,20) node [inner sep=0.75pt]  [font=\small]  {$1$};
\end{tikzpicture}
\caption{The metric space in Case 1.}
\end{subfigure}
\ \ \ \ \
\begin{subfigure}[t]{0.45\linewidth}
\centering
\begin{tikzpicture}[x=0.7pt,y=0.7pt,yscale=-1,xscale=1]
\draw [line width=0.75]  (0,0) -- (300,0) ;
\filldraw (50,0) circle (3pt);
\filldraw (150,0) circle (3pt);
\filldraw (250,0) circle (3pt);

\draw (50,-20) node [inner sep=0.75pt]  [font=\small]  {$\{V, X\}$};
\draw (50,20) node [inner sep=0.75pt]  [font=\small]  {$0$};

\draw (150,-20) node [inner sep=0.75pt]  [font=\small]  {$\{U, Y\}$};
\draw (150,20) node [inner sep=0.75pt]  [font=\small]  {$1$};

\draw (250,-20) node [inner sep=0.75pt]  [font=\small]  {$\{w, Z\}$};
\draw (250,20) node [inner sep=0.75pt]  [font=\small]  {$2$};
\end{tikzpicture}
\caption{The metric space in Case 2.}
\end{subfigure}
\caption{The metric spaces used in the proof Theorem~\ref{thm:k/3<q-LB}. Each of the sets $X,Y,Z$ has size $q$. Both metrics are consistent to the ordinal profile according to which the preference of the $x$ agents in $V$ is $X \succ Y \succ Z$, the preference of all $x$ agents in $U$ is $Y \succ X \succ Z$, and the preference of agent $w$ is $Z \succ Y \succ X$. If the $q$ alternatives of $Z$ are not all included in the committee with positive probability (Case 1), then agent $w$ has positive expected $q$-cost in the first metric space, leading to unbounded distortion as the committee that includes $Z$ and $q$ alternatives from $X\cup Y$ has $q$-social cost $0$. If all alternatives of $Z$ are included in the committee (Case 2), then the $x$ agents of $V$ (in case not all alternatives of $X$ are included in the committee) or the $x$ agents in $U$ (in case not all alternatives of $Y$ are included in the committee) have $q$-cost $1$ in the second metric space, leading to a distortion of $x = \Omega(n)$ as the committee that includes all the alternatives of $X \cup Y$ and some alternatives of $Z$ has $q$-social cost $1$. In any case, the distortion of the voting rule is $\Omega(n)$.
}
\label{fig:k/3<q-LB}
\end{figure}

\begin{theorem} \label{thm:k/3<q-LB}
For $q \in (k/3, k/2]$, the $(k,q)$-distortion of every (even randomized) multiwinner voting rule is $\Omega(n)$.
\end{theorem}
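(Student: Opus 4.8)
The plan is a two-metric dispatch in the spirit of the proof of \Cref{thm:q<=k/3-LB}. Fix any (possibly randomized) rule $f$ and parameters $k,q$ with $q\in(k/3,k/2]$, and take $n=2x+1$: a block $V$ of $x$ agents, a block $U$ of $x$ agents, and one agent $w$; and $m=3q$ alternatives split into blocks $X,Y,Z$ of size $q$ each (so $m>k$ since $q>k/3$, hence no committee can contain $X\cup Y\cup Z$). Let every agent of $V$ report $X\succ Y\succ Z$, every agent of $U$ report $Y\succ X\succ Z$, and $w$ report $Z\succ Y\succ X$, with arbitrary fixed tie-breaking inside the blocks. I would first check (routine) that the two embeddings of \Cref{fig:k/3<q-LB} are both consistent with this profile: $d_1$ places $V\cup U\cup X\cup Y$ at $0$ and $w\cup Z$ at $1$; $d_2$ places $V\cup X$ at $0$, $U\cup Y$ at $1$, and $w\cup Z$ at $2$.

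The argument then branches on $\Pr[Z\subseteq C]$, where $C\sim f$ is the returned committee. \textbf{If this probability is less than $1$}, evaluate $f$ against $d_1$. Since $q\le k/2$, every size-$k$ committee contains at least $k-q\ge q$ alternatives of $X\cup Y$, all of which lie at distance $0$ from every agent of $V\cup U$ and are ranked by them above $Z$; so those agents have $q$-cost $0$, and the whole $q$-social cost is $w$'s, equal to $1$ if $Z\not\subseteq C$ and $0$ otherwise. Hence $\E[\SC(f)]=\Pr[Z\not\subseteq C]>0$ under $d_1$, while the committee $Z\cup Q$ with $Q\subseteq X\cup Y$, $|Q|=k-q$, has $q$-social cost $0$; the distortion is unbounded, in particular $\Omega(n)$.

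\textbf{If $\Pr[Z\subseteq C]=1$}, evaluate $f$ against $d_2$. Then $|C\cap(X\cup Y)|=k-q<2q$ (using $q>k/3$), so $C$ always misses an alternative of $X$ or of $Y$; letting $E_X,E_Y$ denote these events, $\Pr[E_X]+\Pr[E_Y]\ge 1$, so at least one of them, say $E_X$, has probability $\ge 1/2$. Under $d_2$ the only alternatives at distance $0$ from an agent of $V$ are those of $X$, so whenever $E_X$ holds each of the $x$ agents of $V$ has $q$-cost at least $1$; thus $\E[\SC(f)]\ge x\Pr[E_X]\ge x/2$. On the other hand, the committee $X\cup Y\cup Q$ with $Q\subseteq Z$, $|Q|=k-2q$, has $q$-social cost exactly $1$: agents of $V$ get all of $X$ and agents of $U$ get all of $Y$ at distance $0$, and $w$ pays exactly $1$ because $|Q|=k-2q<q$. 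So the distortion is at least $x/2=\Omega(n)$. The case $\Pr[E_Y]\ge 1/2$ is symmetric, using the agents of $U$ and the fact that under $d_2$ a $U$-agent missing an alternative of $Y$ has $q$-cost at least $1$ as well.

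I do not expect a genuine obstacle here: the content lies entirely in tracking which block sits at distance $0$ from which agent, and in the two inequalities $q\le k/2$ (giving $V$ and $U$ their $q$ free alternatives under $d_1$) and $q>k/3$ (forcing $w$ to pay under $d_2$, and making $E_X\cup E_Y$ certain). The only slightly fiddly checks are that $d_1,d_2$ really induce the stated profile and that the benchmark committees attain $q$-social cost $0$ and $1$; one can additionally note that $1$ is optimal under $d_2$, since making $V$, $U$, and $w$ each pay $0$ would separately force $X$, $Y$, and $Z$ entirely into the committee, contradicting $3q>k$.
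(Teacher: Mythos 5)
Your proposal is correct and follows essentially the same construction and case analysis as the paper: the same profile on $V$, $U$, $\{w\}$ with blocks $X,Y,Z$ of size $q$, the same two metrics, and the same dispatch on whether $Z$ is fully included with probability $1$. The only (harmless) difference is in Case 2, where you split into the events $E_X,E_Y$ and lose a factor of $2$; the paper observes that in every realization at least one of $X,Y$ misses an alternative, so the $q$-social cost is at least $x$ pointwise, giving the bound $x$ rather than $x/2$ — both are $\Omega(n)$.
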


\begin{proof}
Fix $k$ and $q \in (k/3,k/2]$, and let $f$ be an arbitrary multiwinner voting rule. We consider instances with $n = 2x+1$ agents, partitioned into two sets $V$ and $U$ of size $x$ each and a singleton $\set{w}$. There are $m=3q$ alternatives, partitioned into three sets $X$, $Y$, and $Z$ consisting of $q$ alternatives each. 

Consider any preference profile subject to the following rules.
\begin{itemize}
\item 
Every agent ranks the alternatives that belong to the same set arbitrarily.

\item
Every agent in $V$ has the ranking $X \succ Y \succ Z$.

\item 
Every agent in $U$ has the ranking $Y \succ X \succ Z$. 

\item 
Agent $w$ has the ranking $Z \succ Y \succ X$.
\end{itemize}
Since $m=3q > k$, the committee returned by $f$ cannot include every alternative with probability $1$. We switch between the following two cases. Figure~\ref{fig:k/3<q-LB} depicts the two metric spaces considered in these cases. 

\paragraph{Case 1: With a positive probability, the committee chosen by $f$ does not include all $q$ alternatives of $Z$.} \ \\
Consider the following metric, which is consistent (up to tie-breaking) with the preference profile defined above:
\begin{itemize}
\item 
The agents in $V \cup U$ and the alternatives in $X\cup Y$ are all located at $0$.

\item 
Agent $w$ and the alternatives of $Z$ are located at $1$. 
\end{itemize}

Since some alternative in $Z$ is not included in the chosen committee with a positive probability, the expected $q$-cost of agent $w$, and thus the expected $q$-social cost under $f$, is positive. However, the committee that includes all of $Z$ and any $q$ alternatives from $X \cup Y$ has $q$-social cost $0$. Therefore, the distortion of $f$ is unbounded. 

\paragraph{Case 2: With probability $1$, the committee chosen by $f$ includes every alternative of $Z$.} \ \\
Consider the following metric, which is again consistent (up to tie-breaking) with the preference profile defined above:
\begin{itemize}
\item
The agents in $V$ and the alternatives in $X$ are at $0$.

\item 
The agents in $U$ and the alternatives in $Y$ are at $1$.

\item 
Agent $w$ and the alternatives in $Z$ are  at $2$.
\end{itemize}
Recall that with probability $1$, $f$ chooses a committee that includes all $q$ alternatives of $Z$. Any such committee cannot contain all alternatives in $X \cup Y$. If it does not include an alternative of $X$, the $q$-cost of every agent in $V$ is at least $1$, and if it does not include an alternative of $Y$, the $q$-cost of every agent in $U$ is at least $1$. Either way, we have that, with probability $1$, $f$ chooses a committee with $q$-social cost at least $x$. Hence, the expected $q$-social cost under $f$ is at least $x$. 
However, the committee that includes all $2q$ alternatives of $X \cup Y$ and  $k-2q$ arbitrary alternatives of $Z$ has $q$-social cost $1$; only agent $w$ has $q$-cost $1$. So, the distortion of $f$ is at least $x = \Omega(n)$ in this case. 
\end{proof}

\section{Constant Distortion With $q > k/2$}
We now turn our attention to the case of $q > k/2$, where it is possible to achieve constant distortion. A crucial observation, which we exploit in this section, is that the $q$-costs of the agents form a new metric over the agents and all possible committees of alternatives. 

\begin{lemma} \label{lem:q>k/2-metric}
For any instance $I=(N,A,d,k,q)$ with $q > k/2$, the $q$-costs $c_i(C)$ of agents $i \in N$ for $k$-sized committees of alternatives $C \subset A$ form a pseudometric.
\end{lemma}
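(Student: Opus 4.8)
The plan is to show that the function $D$ defined on pairs from the set $N \cup \binom{A}{k}$ (where agents and $k$-sized committees are treated as the ground set) is a pseudometric. I would define $D(i,C) = D(C,i) = c_i(C)$ for an agent $i$ and committee $C$, $D(i,i') = 0$ for agents $i,i'$ (or, more carefully, whatever is forced by the triangle inequality — since the framework never uses agent-agent distances, the natural choice is to set it via $\min$ over committees or simply note these are not needed), and $D(C,C') = 0$ or the value forced by committees. Actually, the cleanest route, and the one I expect the paper takes, is to define the new ground set as just the committees together with the agents and verify the three pseudometric axioms directly for the quantities that matter: non-negativity and symmetry are immediate from the definition $c_i(C) = \max_{x \in \top_i(C)} d(i,x) \ge 0$, so the only real content is the triangle inequality.

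The heart of the argument is a triangle inequality of the form $c_i(C) \le c_i(C') + (\text{something}) + c_j(C')$ — mirroring the agent/alternative form of the triangle inequality quoted in the preliminaries — and, most importantly, $c_i(C) \le c_i(C') + c_j(C') + c_j(C)$ for all agents $i,j$ and committees $C, C'$. To prove this, the key step is to exploit $q > k/2$: since $|C| = |C'| = k$ and $q > k/2$, we have $2q > k \ge |C \cup C'| - |C \cap C'| + |C \cap C'| $... more precisely, $|\top_i(C') \cap C| \ge |\top_i(C')| + |C| - |C'| $ is not quite it either; the right statement is that $\top_j(C)$ and $\top_j(C')$ both have size $q$ and live inside sets of size $k < 2q$, so $\top_j(C) \cap \top_j(C') \neq \varnothing$ — wait, that requires both inside a common size-$k$ set. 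The actual pigeonhole I want: for any agent $j$, among $C$ there are $q$ alternatives in $\top_j(C)$; I need to connect $i$'s closest alternatives in $C$ to $j$'s. Let me restate: let $x$ be $i$'s $q$-th closest alternative in $C'$ (so $d(i,x) = c_i(C')$, and $x$ together with the $q-1$ closer ones forms $\top_i(C')$, a set of size $q$). Similarly the analog for $j$ with respect to $C$. The crucial pigeonhole is that $\top_i(C')$ (size $q$) and $\top_j(C')$ (size $q$) both sit inside $C'$ (size $k < 2q$), hence intersect; pick $y \in \top_i(C') \cap \top_j(C')$. Likewise $\top_j(C)$ and $\top_i(C)$... hmm, I need to land inside $C$ for the final term. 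Let me instead pick $w \in \top_j(C) \cap \top_i(C)$ — again nonempty since both are size-$q$ subsets of the size-$k$ set $C$. Then $c_i(C) \le d(i,w)$ since $w \in \top_i(C)$, and by the quoted triangle inequality $d(i,w) \le d(i,y) + d(j,y) + d(j,w)$ where $d(i,y) \le c_i(C')$ (as $y \in \top_i(C')$), $d(j,y) \le c_j(C')$ (as $y \in \top_j(C')$), and $d(j,w) \le c_j(C)$ (as $w \in \top_j(C)$). This yields exactly $c_i(C) \le c_i(C') + c_j(C') + c_j(C)$.

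To finish, I would observe that this four-term inequality is precisely the "equivalent formulation of the triangle inequality" from the preliminaries applied to the metric on $N \cup \binom{A}{k}$ (with agents playing the role of agents and committees the role of alternatives), and cite that this is known to be equivalent to $D$ extending to a genuine pseudometric on the full ground set including agent-agent and committee-committee distances. Non-negativity, symmetry, and $D(x,x)=0$ are trivial. So the pseudometric claim follows. The main obstacle — and the only nontrivial step — is the pigeonhole argument establishing that $\top_i$ and $\top_j$ of the same committee always intersect, which is exactly where the hypothesis $q > k/2$ is used; everything else is bookkeeping. One subtlety to be careful about: I should make sure the same pigeonhole is applied twice (once inside $C'$ to get $y$, once inside $C$ to get $w$), and that I do not accidentally need $\top_i(C) \cap \top_j(C')$ to be nonempty, which would be false in general since those live in different committees.
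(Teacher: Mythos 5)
Your overall framing matches the paper's: nonnegativity and symmetry are immediate, agent--agent and committee--committee distances are irrelevant because of the equivalent four-point formulation of the triangle inequality, and the whole content is the inequality $c_i(C) \le c_i(C') + c_j(C') + c_j(C)$. Your choice of $y \in \top_i(C') \cap \top_j(C')$, nonempty by pigeonhole since both sets have size $q$ and sit inside $C'$ with $|C'| = k < 2q$, is exactly right and gives $d(i,y) \le c_i(C')$ and $d(j,y) \le c_j(C')$.

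However, there is a genuine gap on the $C$-side. You pick $w \in \top_i(C) \cap \top_j(C)$ and assert ``$c_i(C) \le d(i,w)$ since $w \in \top_i(C)$.'' This is backwards: $c_i(C) = \max_{x \in \top_i(C)} d(i,x)$, so membership $w \in \top_i(C)$ gives $d(i,w) \le c_i(C)$, not the reverse. Your chain then upper-bounds $d(i,w)$, which tells you nothing about $c_i(C)$; e.g., $w$ could be $i$'s closest alternative in $C$, at distance far below $c_i(C)$. What you actually need is an alternative $x \in C$ with the two \emph{opposite-facing} properties $d(i,x) \ge c_i(C)$ and $d(j,x) \le c_j(C)$, i.e., an $x$ that lies in $\top_j(C)$ but is ranked in position $q$ or worse by agent $i$ within $C$. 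Such an $x$ exists by a different (and easier) counting argument, which is the one the paper uses: $\top_j(C)$ has $q$ elements while agent $i$'s top $q-1$ alternatives in $C$ are only $q-1$ elements, so some element of $\top_j(C)$ is at rank at least $q$ for $i$ in $C$. Note this step needs no hypothesis on $q$ at all; the condition $q > k/2$ is used only for your (correct) pigeonhole inside $C'$. With $w$ replaced by such an $x$, your chain $c_i(C) \le d(i,x) \le d(i,y) + d(j,y) + d(j,x) \le c_i(C') + c_j(C') + c_j(C)$ goes through and coincides with the paper's proof.
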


\begin{proof}
To prove the statement, we need to show that the $q$-costs satisfy the triangle inequality, i.e., $c_i(X) \leq c_i(Y)+c_j(Y)+c_j(X)$, for any two agents $i,j \in N$ and two $k$-sized committees $X,Y$. 

For any $q \leq k$, there exists $x \in \top_j(X)$ that is among the $k-q+1$ {\em least favorite} alternatives (ranked in some position from $q$ to $k$) of agent $i$ in $X$; thus, $x$ is such that $c_i(X) \leq d(i,x)$ and $c_j(X) \geq d(j,x)$. Since $q>k/2$, there also exists $y \in \top_i(Y) \cap \top_j(Y)$, that is, $y$ is such that $c_i(Y) \geq d(i,y)$ and $c_j(Y) \geq d(j,y)$. Combining these with the triangle inequality for the distances between agents and alternatives, we have that
\begin{align*}
c_i(X) \leq d(i,x) &\leq d(i,y) + d(j,y)+d(j,x) \\
&\leq c_i(Y)+c_j(Y)+c_j(X),
\end{align*}
as desired. 
\end{proof}

Since the $q$-costs form a metric, we can transform the original profile in which the agents rank the alternatives to one in which the agents rank all possible $k$-sized committees. In particular, to decide whether agent $i$ prefers a committee $X$ over another committee $Y$, it suffices to compare her $q$-th favorite alternatives in $X$ and $Y$; we can break ties arbitrarily. 
Given the rankings of the agents over the committees, we can then employ any single-winner rule to decide the final committee. Specifically, using the {\sc Plurality-Matching} rule of \citet{gkatzelis2020resolving}, we obtain a best-possible distortion bound of $3$. 

\begin{corollary}
For $q > k/2$, there exists a multiwinner voting rule with $(k,q)$-distortion at most $3$.
\end{corollary}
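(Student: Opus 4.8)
The plan is to combine the two ingredients already assembled in this section: Lemma~\ref{lem:q>k/2-metric}, which says the $q$-costs $c_i(C)$ behave like distances in a genuine pseudometric over $N$ together with the set $\mathcal{C} = \{C \subseteq A : |C| = k\}$ of all committees, and the fact that an agent's ranking over committees can be extracted from her ranking over individual alternatives. Concretely, I would first argue that the ordinal preferences over committees are well-defined and computable from the profile: for agent $i$ and committees $X,Y$, declare $X \succ_i Y$ iff the position (in $\succ_i$) of $i$'s $q$-th favorite alternative in $X$ is strictly better than that of her $q$-th favorite alternative in $Y$, breaking remaining ties arbitrarily but consistently. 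Since $x \succ_i y$ implies $d(i,x) \le d(i,y)$, this induced ranking is consistent with the $q$-costs, i.e., $X \succ_i Y$ only if $c_i(X) \le c_i(Y)$ — exactly the consistency condition that the metric-distortion framework requires of a preference profile relative to an underlying (pseudo)metric.

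Next I would set up the reduction to single-winner voting. Form the single-winner instance $I'$ whose agent set is $N$, whose "alternative" set is $\mathcal{C}$, and whose underlying pseudometric is the one from Lemma~\ref{lem:q>k/2-metric} (distances $c_i(C)$ between agent $i$ and committee $C$, suitably extended to committee-committee and agent-agent pairs, as guaranteed by that lemma). The induced profile of $I'$ is the committee-ranking profile just described, which is consistent with this pseudometric. The social cost of a committee $C$ viewed as a single-winner alternative in $I'$ is $\sum_{i\in N} c_i(C) = \SC_q(C\mid d)$, so optimal solutions and approximation ratios transfer verbatim between the multiwinner instance $I$ and the single-winner instance $I'$. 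Now apply the \textsc{Plurality-Matching} rule of \citet{gkatzelis2020resolving}, which has metric distortion at most $3$ on any single-winner instance; running it on $I'$ returns a committee whose $q$-social cost is at most $3$ times that of the optimal committee. Taking the supremum over all instances $I$ with $q > k/2$ yields $(k,q)$-distortion at most $3$.

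The only real subtlety — and the step I would be most careful about — is making sure the reduction is \emph{legal}: that the rule we invoke is a genuine ordinal single-winner rule fed only a preference profile, and that the profile we feed it is realizable by the claimed pseudometric. Lemma~\ref{lem:q>k/2-metric} supplies the pseudometric; the consistency check above supplies the rest. I would note explicitly that this argument is oblivious to efficiency: $\mathcal{C}$ has $\binom{m}{k}$ elements, so the resulting rule runs in exponential time in general (which is why the paper separately develops an efficient $9$-approximation and a poly-time, constant-$n$ exact bound). One should also confirm that \textsc{Plurality-Matching}, or any optimal single-winner rule, tolerates pseudometrics (zero distances between distinct points) rather than strict metrics — this is standard in the metric-distortion literature and already assumed throughout the paper, so no extra work is needed. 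With these points in place the corollary follows immediately from the stated results, so the "proof" is essentially the observation that the pieces fit together; I would keep it to a short paragraph invoking Lemma~\ref{lem:q>k/2-metric} and the $3$-distortion bound of \citet{gkatzelis2020resolving}.
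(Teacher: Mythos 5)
Your proposal is correct and matches the paper's own argument: Lemma~\ref{lem:q>k/2-metric} turns the $q$-costs into a pseudometric over agents and committees, the committee rankings are induced from the individual rankings, and applying the \textsc{Plurality-Matching} rule of \citet{gkatzelis2020resolving} to this (exponentially large) single-winner instance yields distortion at most $3$. The extra care you take with consistency of the induced profile and the pseudometric is fine but does not change the route.
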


Unfortunately, the above approach requires us to apply a single-winner voting rule to a profile consisting of an exponential number of alternatives (the set of all $k$-sized committees). This is an inherent obstacle in our attempts to get constant distortion bounds by na\"{\i}vely applying known deterministic single-winner voting rules.

Interestingly, it is still easy to compute the favorite $k$-sized committee of an agent, which consists of the agent's $k$ most preferred alternatives; we refer to this as the {\em top-$k$} committee of the agent. Consequently, randomized dictatorship, the single-winner voting rule which selects an agent uniformly at random and returns her most favorite alternative, can be efficiently implemented in our setting. Using its distortion analysis by~\citet{AP17}, we obtain the following.

\begin{corollary}
For any $q>k/2$, the $(k,q)$-distortion of randomized dictatorship, which can be implemented efficiently, is at most $3-2/n$.
\end{corollary}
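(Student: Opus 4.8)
The plan is to invoke the distortion guarantee of randomized dictatorship for single-winner voting in the metric setting, applied to the induced profile over committees. By Lemma~\ref{lem:q>k/2-metric}, when $q > k/2$ the $q$-costs $c_i(C)$ constitute a pseudometric over the set of agents together with all $k$-sized committees, so we may legitimately view each committee as a ``virtual alternative'' at distance $c_i(C)$ from agent $i$. As noted in the text preceding the corollary, an agent's ranking over these virtual alternatives is obtained by comparing her $q$-th favorite alternatives inside each pair of committees, so the ordinal profile over committees is well-defined and, crucially, the top-ranked virtual alternative of agent $i$ is exactly her top-$k$ committee $\top_i(A)$ (in the notation used here, the set of her $k$ most preferred alternatives), which can be read off directly from $\succ_i$.

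The first step is therefore to record that ``randomized dictatorship'' in this induced instance --- pick an agent $\ell \in N$ uniformly at random and return her favorite virtual alternative --- coincides with the rule that picks a uniformly random agent $\ell$ and outputs her top-$k$ committee. This is efficiently implementable: computing $\top_\ell(A)$ requires only reading off the first $k$ entries of $\succ_\ell$. The second step is to apply the known metric-distortion bound for randomized dictatorship: \citet{AP17} show that for any single-winner metric instance, selecting a uniformly random agent and returning her top choice yields expected social cost at most $(3 - 2/n)$ times the optimal social cost, where $n$ is the number of agents. Since Lemma~\ref{lem:q>k/2-metric} certifies that our induced instance is a bona fide metric instance with the same $n$ agents, the bound transfers verbatim: the expected $q$-social cost of the random top-$k$ committee is at most $(3-2/n) \cdot \min_{C:\,|C|=k} \SC_q(C|d)$. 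The third step is simply to observe that the distortion as defined in the preliminaries is exactly this worst-case ratio, so the $(k,q)$-distortion of the rule is at most $3 - 2/n$.

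There is no serious obstacle here --- the corollary is essentially a black-box reduction --- but one point deserves a line of care: the \citet{AP17} guarantee must apply to the induced committee instance, and that instance's ``metric'' is a pseudometric (distances between two committees or between two agents need not be specified, and $c_i(C)$ could be zero for distinct $C$). One should check that the randomized dictatorship analysis only uses the agent-to-alternative triangle inequality in the form stated in the Preliminaries (which Lemma~\ref{lem:q>k/2-metric} gives us), and that it tolerates pseudometrics and ties in the rankings; this is the case, since that analysis never relies on distinctness or on distances internal to the agent set. A second minor point to flag is that the optimum in the induced instance is over all $k$-committees, which is precisely the denominator in the $(k,q)$-distortion, so no loss is incurred in the reduction. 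Assembling these observations yields the claimed bound of $3 - 2/n$.
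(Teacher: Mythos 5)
Your proposal is correct and follows essentially the same route as the paper: use Lemma~\ref{lem:q>k/2-metric} to view committees as alternatives in a pseudometric induced by the $q$-costs, observe that an agent's favorite committee is her top-$k$ committee (computable directly from her ranking), and invoke the $3-2/n$ bound of \citet{AP17} for randomized dictatorship in the single-winner metric setting. Your extra remarks on pseudometrics, ties, and the optimum being taken over all $k$-sized committees are sound and only make explicit what the paper leaves implicit.
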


In the following, we restrict our attention to deterministic polynomial-time multiwinner voting rules. Our main result provides such a rule with distortion at most $9$ by exploiting the following lemma.

\begin{lemma} \label{lem:top-k-optimal}
Consider any instance $I=(N,A,d,k,q)$ with $q > k/2$, and let $O$ be an optimal committee for $I$. 
There exists a committee $C$ that is top-$k$ for some agent, and such that $\SC(C) \leq 3 \cdot \SC(O)$.
\end{lemma}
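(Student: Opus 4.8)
The plan is to find a single agent whose top-$k$ committee is a good approximation of the optimum $O$, leveraging the structural Lemma~\ref{lem:best-of-some-voters} together with the metric structure of $q$-costs established in Lemma~\ref{lem:q>k/2-metric}. First I would invoke Lemma~\ref{lem:best-of-some-voters} to obtain the set $S$ of agents. Since $q > k/2$, we have $\floor{k/q} = 1$, so $|S| \le 1$; in fact $S \ne \varnothing$ (the first agent considered in Algorithm~\ref{alg:S-construction} is always added), so $S = \{j\}$ for a single agent $j$. Lemma~\ref{lem:best-of-some-voters} then tells us two things: (i) for every agent $i \in N$, $\top_i(O) \cap \top_j(O) \neq \varnothing$ and $c_j(O) \le c_i(O)$; and (ii) for every agent $i$ and every committee $C \supseteq \top_j(A) = T_j$, we have $c_i(C) \le 3 \cdot c_i(O)$.

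The natural candidate is therefore $C = T_j$ itself, the top-$k$ committee of agent $j$. Since $|T_j| = k$, it is a valid committee, and it trivially contains $\top_j(A) = T_j$. Applying claim (ii) of Lemma~\ref{lem:best-of-some-voters} with this $C$, we get $c_i(T_j) \le 3 \cdot c_i(O)$ for every agent $i \in N$. Summing over all agents yields $\SC(T_j) = \sum_{i \in N} c_i(T_j) \le 3 \sum_{i \in N} c_i(O) = 3 \cdot \SC(O)$, which is exactly the desired bound with $C = T_j$.

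So the proof reduces almost entirely to correctly extracting the $|S| = 1$ case from Lemma~\ref{lem:best-of-some-voters} and observing that the second claim of that lemma applies verbatim to $C = T_j$. The one subtlety to double-check is that $S$ is genuinely nonempty — this follows because the loop in Algorithm~\ref{alg:S-construction} adds agent $1$ unconditionally (the condition ``$\top_1(O) \cap \top_j(O) = \varnothing$ for all $j \in S$'' is vacuously true when $S = \varnothing$). I do not expect any real obstacle here; the heavy lifting was already done in Lemma~\ref{lem:best-of-some-voters}, and the role of the hypothesis $q > k/2$ is simply to force $\floor{k/q} = 1$ so that the guaranteed set $S$ is a singleton, pinning down a single agent whose favorite committee works.
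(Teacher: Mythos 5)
Your argument is correct, but it takes a different route from the paper's. The paper proves this lemma from Lemma~\ref{lem:q>k/2-metric}: since $q>k/2$ the $q$-costs form a pseudometric over agents and committees, so taking $j$ with minimal $c_j(O)$ and $C$ equal to $j$'s top-$k$ committee, the committee-level triangle inequality gives $c_i(C)\le c_i(O)+c_j(O)+c_j(C)\le 3c_i(O)$ directly. You instead reuse Lemma~\ref{lem:best-of-some-voters}: the condition $q>k/2$ forces $\floor{k/q}=1$, so the set $S$ produced by Algorithm~\ref{alg:S-construction} is a singleton $\{j\}$ (nonempty because the first agent is added unconditionally), and the second claim of that lemma, applied to any committee containing $\top_j(A)$, gives the per-agent bound $c_i(C)\le 3c_i(O)$; summing finishes the proof. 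Both arguments pin down the same agent (the one closest to $O$ in $q$-cost) and, unrolled, use the same underlying-distance triangle-inequality chain, so the content is equivalent; your version buys a proof that bypasses the pseudometric lemma entirely by recycling the structural lemma from Section~\ref{sec:k/3<q<=k/2}, while the paper's version is shorter and makes the metric viewpoint explicit, which it then exploits further (e.g., for the reduction to single-winner rules). One notational slip to fix: in the paper $T_j=\top_j(A)$ is the set of $j$'s top \emph{$q$} alternatives, so $|T_j|=q$, not $k$; your candidate $C$ should be stated as $j$'s top-$k$ committee (her $k$ most preferred alternatives), which is a valid committee and contains $T_j$, so the second claim of Lemma~\ref{lem:best-of-some-voters} still applies verbatim. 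This is a wording issue only and does not affect the correctness of the argument.
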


\begin{proof}
By Lemma~\ref{lem:q>k/2-metric}, the $q$-costs of the agents form a metric space. 
Let $j$ be the closest agent to the optimal committee $O$ according to her $q$-cost, i.e., $c_j(O) \leq c_i(O)$ for every $i \in N$. 
Let $C$ be the top-$k$ committee of $j$, and thus $c_j(C)\leq c_j(O)$.
Combining the above with the triangle inequality for the $q$-cost metric, for any agent $i \in N$, we have
\begin{align*}
c_i(C) 
&\leq c_i(O) + c_j(O) + c_j(C) \\
&\leq c_i(O) + 2c_j(O) \leq 3 c_i(O),
\end{align*}
The lemma follows by summing over all agents. 
\end{proof}

Essentially, Lemma~\ref{lem:top-k-optimal} suggests that the best top-$k$ committee must $3$-approximate the optimal committee in terms of social cost. So, by considering the $n$ top-$k$ committees (one per agent) as alternatives, we can deploy the single-winner rule of~\citet{gkatzelis2020resolving} to obtain in polynomial time a committee that is within a factor of $3$ away from the best top-$k$ committee in terms of social cost, and thus has a $(k,q)$-distortion of at most $9$.

\begin{corollary}\label{cor:dist-9}
For any $q > k/2$, there is a polynomial-time multiwinner voting rule with $(k,q)$-distortion at most $9$.
\end{corollary}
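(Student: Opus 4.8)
The plan is to combine Lemma~\ref{lem:top-k-optimal} with the single-winner rule of \citet{gkatzelis2020resolving} applied to a carefully constructed reduced instance. The key idea is that Lemma~\ref{lem:q>k/2-metric} tells us the $q$-costs form a pseudometric over the agents and $k$-sized committees, and Lemma~\ref{lem:top-k-optimal} tells us that among the $n$ top-$k$ committees (one per agent), the best one $3$-approximates the optimal committee. So I would restrict the universe of candidate committees to exactly these $n$ top-$k$ committees, treat each as a single ``alternative,'' and run a single-winner voting rule on this reduced instance.

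First I would make precise the reduced instance: let $\mathcal{C} = \{C_1,\dots,C_n\}$ where $C_i$ is the top-$k$ committee of agent $i$ (computable in polynomial time, since it is just agent $i$'s $k$ most preferred alternatives). Each agent $i$ needs a ranking over $\mathcal{C}$; by Lemma~\ref{lem:q>k/2-metric} and the discussion following it, agent $i$ ranks committee $X$ above $Y$ iff her $q$-th favorite alternative in $X$ is (weakly) closer than her $q$-th favorite in $Y$ — and this comparison uses only $\succ_i$, so the reduced profile is computable in polynomial time. Second, I would invoke \citet{gkatzelis2020resolving}: applied to the reduced instance, \textsc{PluralityMatching} runs in polynomial time and returns some $C^* \in \mathcal{C}$ with $\SC(C^*) \le 3 \cdot \min_{C \in \mathcal{C}} \SC(C)$, since the $q$-costs over $\mathcal{C} \cup N$ form a (pseudo)metric consistent with the reduced rankings and its distortion guarantee is $3$.

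Third, I would chain the two factors of $3$. By Lemma~\ref{lem:top-k-optimal}, there is some top-$k$ committee $C \in \mathcal{C}$ with $\SC(C) \le 3 \cdot \SC(O)$ where $O$ is globally optimal; hence $\min_{C' \in \mathcal{C}} \SC(C') \le 3\cdot\SC(O)$, and combining with the previous step gives $\SC(C^*) \le 9 \cdot \SC(O)$. The overall procedure is polynomial: constructing the $n$ top-$k$ committees and the reduced rankings is polynomial, and \textsc{PluralityMatching} on $n$ alternatives and $n$ agents is polynomial. This establishes $(k,q)$-distortion at most $9$.

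The main subtlety to get right — and the one place I would be careful — is verifying that the single-winner distortion bound of \citet{gkatzelis2020resolving} genuinely applies to the reduced instance. Their $3$-distortion guarantee is for instances where agents and alternatives lie in a metric space and agents rank alternatives by distance. Here the ``metric'' is the $q$-cost pseudometric from Lemma~\ref{lem:q>k/2-metric}, restricted to $N \cup \mathcal{C}$, and the reduced rankings are exactly the distance-induced rankings (with arbitrary tie-breaking), so the hypotheses are met; the only thing to note is that a pseudometric (rather than a strict metric) and ties in the rankings are harmless, since the lower-bound constructions and the matching-based upper bound of \citet{gkatzelis2020resolving} both tolerate these. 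Everything else is routine.
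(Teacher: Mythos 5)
Your proposal is correct and follows essentially the same route as the paper: reduce to the $n$ top-$k$ committees as alternatives (rankings derived from the $q$-cost pseudometric of Lemma~\ref{lem:q>k/2-metric}), apply the rule of \citet{gkatzelis2020resolving} to lose a factor of $3$ against the best top-$k$ committee, and chain this with the factor $3$ from Lemma~\ref{lem:top-k-optimal} to obtain distortion $9$ in polynomial time. Your extra care about the pseudometric and tie-breaking being harmless is a fine addition but not a departure from the paper's argument.
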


Actually, the approach used to obtain Corollary~\ref{cor:dist-9} can be used as a template that could potentially lead to even lower distortion bounds by deterministic polynomial-time voting rules. All we need is an algorithm 
that takes as input the ranking profile and decides --- in polynomial time --- on a set $P$ of $k$-sized committees, which, for every distance metric consistent with the ranking profile, contains a committee that approximates the optimal social cost within a factor of $\rho$ (e.g., Lemma~\ref{lem:top-k-optimal} suggests that this is possible for $\rho=3$). Then, applying the voting rule of~\citet{gkatzelis2020resolving} on the reduced  ranking profile with only the committees in $P$ as alternatives, we obtain a deterministic polynomial-time multiwinner voting rule with $(k,q)$-distortion at most $3\rho$.

We present two results related to this template. The first one is positive and shows that a guarantee of $\rho=1$ is possible when the number of agents is constant, making the optimal distortion bound of $3$ achievable in polynomial time. 

\begin{theorem} \label{thm:q>k/2-constant-n}
For any $q>k/2$ and constant number of agents, there is a deterministic polynomial-time multiwinner voting rule with $(k,q)$-distortion at most $3$.
\end{theorem}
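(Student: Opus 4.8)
The plan is to instantiate the template described just before the theorem with $\rho = 1$: I will show that, when $n$ is constant, one can compute in polynomial time a set $P$ of $k$-sized committees such that for \emph{every} metric $d$ consistent with the ranking profile, $P$ contains a committee that is optimal for $d$. Applying the rule of \citet{gkatzelis2020resolving} to the reduced ordinal instance whose alternatives are the committees in $P$ (with each agent's preference over $P$ derived as in \Cref{lem:q>k/2-metric}) then returns a committee $W \in P$ with $\SC_q(W|d) \le 3\min_{C \in P}\SC_q(C|d) = 3\cdot\min_{|C|=k}\SC_q(C|d)$, i.e., distortion at most $3$.

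The key observation enabling an efficient construction of $P$ is that, for $q>k/2$ --- in fact for any $q \le k$ --- the $q$-cost vector of a committee $C$, and hence whether $C$ is optimal under a given metric, depends only on the tuple $(\tau_1(C),\ldots,\tau_n(C))$, where $\tau_i(C)$ denotes the $q$-th most preferred alternative of agent $i$ in $C$ according to $\succ_i$; indeed $c_i(C) = d(i,\tau_i(C))$ for every consistent $d$. There are at most $m^n$ such tuples, which is polynomial for constant $n$. For each candidate tuple $\vec\tau = (\tau_1,\ldots,\tau_n) \in A^n$, I would test whether some $C$ with $|C|=k$ realizes it, i.e., $\tau_i \in C$ and exactly $q-1$ alternatives of $C$ lie above $\tau_i$ in $\succ_i$ for every $i$; if so, I add one such $C$ to $P$. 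This feasibility test, together with the construction of a witness committee, reduces to an integer feasibility problem in which the alternatives are grouped into the at most $2^n$ classes determined by the sign pattern $\bigl(\mathbf{1}[x \succ_i \tau_i]\bigr)_{i \in N}$ and one decides how many alternatives to take from each class, subject to the forced inclusion of $\tau_1,\ldots,\tau_n$, the $n$ per-agent counting constraints, and the size constraint $|C|=k$. This program has a constant number of variables and constraints, hence is solvable in polynomial time (e.g., by Lenstra's algorithm for integer programming in fixed dimension). So $P$ is computed in polynomial time and $|P| \le m^n$.

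Correctness of $P$ is then immediate: given any metric $d$ consistent with the profile, let $O$ be an optimal committee for $d$; then $O$ realizes the tuple $(\tau_1(O),\ldots,\tau_n(O))$, so $P$ contains some $C$ realizing the same tuple, whence $c_i(C) = d(i,\tau_i(O)) = c_i(O)$ for all $i$ and therefore $\SC_q(C|d) = \SC_q(O|d)$. Finally, to run the rule of \citet{gkatzelis2020resolving} on $P$, I form each agent's ranking over $P$ by ordering committees according to the position of $\tau_i(C)$ in $\succ_i$, breaking ties arbitrarily; by \Cref{lem:q>k/2-metric} this ranking is induced by the $q$-cost pseudometric restricted to $N \times P$, so the rule's distortion guarantee of $3$ applies verbatim. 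Combining this guarantee with the correctness of $P$ yields the claimed bound, and all steps run in time polynomial in $m$ (with $n$ constant).

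I expect the main obstacle to be the realizability/enumeration step: one must check carefully that (i) it genuinely suffices to enumerate committees by their threshold tuples $(\tau_i(C))_{i}$ --- this is exactly the $q$-cost-vector argument above --- and (ii) deciding realizability of a tuple and producing a witness committee is polynomial for constant $n$, the cleanest route being the reduction to fixed-dimension integer programming (alternatively, one could try to exploit integrality of the associated transportation-type polytope). The remaining ingredients --- correctness of $P$ and deriving the agents' ordinal preferences over committees via \Cref{lem:q>k/2-metric} --- are routine.
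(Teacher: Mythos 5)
Your proposal is correct and follows essentially the same route as the paper: enumerate the at most $m^n$ candidate tuples of per-agent $q$-th-closest alternatives, test realizability of each via a constant-size classification of alternatives and a fixed-dimension integer feasibility check (the paper uses $3^n$ types and naive enumeration where you invoke Lenstra, an immaterial difference for constant $n$), and then apply the rule of \citet{gkatzelis2020resolving} to the reduced instance, giving $3\rho$ with $\rho=1$. No gaps worth noting.
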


\begin{proof}
We prove the theorem by defining a voting rule that follows our template with $\rho=1$. In particular, given a ranking profile, our rule first identifies a set $P$ of at most $m^n$ committees such that one of minimum social cost is always included, no matter which distance function $d$, consistent with the preference profile, is used in the definition of the social cost. Then, by invoking the  single-winner rule of \citet{gkatzelis2020resolving} using the committees in $P$ as alternatives, we get the desired multiwinner voting rule with $(k,q)$-distortion of at most $3$.

To identify the committees in $P$, we enumerate over all possibilities for the alternatives in a $k$-sized committee that the agents can have as their $q$-th closest ones. In particular, for each of the $m^n$ possible vectors of alternatives $\bell=\langle \ell_1, \ell_2, ..., \ell_n\rangle$, we include in $P$ a committee $C\subseteq A$ with $|C|=k$, $\bigcup_{i=1}^n{\{\ell_i\}}\subseteq C$, such that $\ell_i$ is the alternative in $C$ that is the $q$-th closest to agent $i$, for $i \in [n]$, if such a committee exists. We refer to each committee having these properties as a {\em $(k,q)$-completion} of the vector of alternatives $\bell$. Notice that including in $P$ just one $(k,q)$-completion for each vector of alternatives is enough for our purposes since any two committees $C$, $C'$ that are $(k,q)$-completions for the same vector $\mathbf{\ell}$ of alternatives have the same social cost for a given distance function $d$:
$\SC_q(C|d) = \SC_q(C'|d) =\sum_{i=1}^n{d(i,\ell_i)}.$
Clearly, since $n$ is a constant, $P$ has polynomial size. To prove that its whole construction takes polynomial time, it suffices to show how a $(k,q)$-completion $C$ for a given vector of alternatives $\bell$ can be identified (if it exists) in polynomial time. Essentially, we need to decide which alternatives in addition to the ones in $\bell$ form a $k$-sized committee and ensure that alternative $\ell_i$ is the $q$-th closest to agent $i$ among those in $C$. To do so, we define the following classification of the alternatives into types from the set 
\begin{align*}
    \types &= \{\langle t_1, t_2, ..., t_n\rangle: t_i\in\{+1,0,-1\}, i\in N\}.
\end{align*}
An alternative $a\in A$ belongs to type $t\in \types$ if and only if
\begin{itemize}
    \item $a \succ_i \ell_i$ for each $i\in N$ with $t_i=+1$,
    \item $a=\ell_i$ for each $i\in N$ with $t_i=0$, and
    \item $\ell_i \succ_i a$ for each $i\in N$ with $t_i=-1$.
\end{itemize}
Notice that the classification is such that replacing an alternative in a $(k,q)$-completion by another alternative from the same class results in another $(k,q)$-completion. Hence, to identify a $(k,q)$-completion $C$ for a given vector of alternatives $\bell$, we need to identify the number of alternatives from each type of $\types$ that $C$ should have.

For each type $t\in \types$, denote by $H(t)$ the number of alternatives of type $t$. Also, set $L(t)=1$ if $t$ is the type of some alternative in $\bell$ and $L(t)=0$, otherwise. The quantities $L(t)$ and $H(t)$ are lower and upper bounds on the number $h(t)$ of alternatives of type $t$ that can be included in a $(k,q)$-completion of $\bell$. In particular, notice that each alternative in $\bell$ is the unique alternative in its type $t$, i.e., $H(t)=1$. Setting $L(t)=1$ for this type guarantees that this alternative will be included in the $(k,q)$-completion. Now, the existence of a $(k,q)$-completion is equivalent to the feasibility of the following integer linear program:
\begin{align*}
    & \sum_{t\in \types}{h(t)}=k\\
    & \sum_{t\in \types: t_i\geq 0}{h(t)}=q, \forall i\in N\\
    & L(t) \leq h(t) \leq H(t), \forall t\in \types\\
    & h(t)\in \N_{\geq 0}, \forall t\in \types
\end{align*}
Notice that we can naively check the feasibility of the above ILP by enumerating all possible values for the variables $h(t)$. As there are $3^n$ types, there are $3^n$ such variables (i.e., constantly many since $n$ is a constant), each taking at most $m+1$ values. Even this naive solution takes only polynomial time. Finally, once we have a feasible solution for the above ILP, we can easily define the $(k,q)$-completion by just including $h(t)$ alternatives of type $t$ in it, for $t\in \types$.
\end{proof}

Our last result is negative and reveals a limitation of our template. It shows a lower bound of $1+2/e$ on $\rho$ (unless $\text{P} = \text{NP}$) and, consequently, a lower bound of approximately $5.207$ on the distortion that can be achieved via this template. In the appendix, we present a slightly weaker lower bound of $4.5$ that is purely information-theoretic and does not rely on any complexity assumption. 

\begin{theorem}\label{thm:hard-P}
Let $\mathcal{A}$ be an algorithm which on input a ranking profile with $n$ agents and $m$ alternatives, and integers $k$ and $q$ with $1\leq k/2 < q \leq k < m$, runs in time polynomial in $m$ and $n$ and returns a set $P$ of (polynomially many) $k$-sized committees of alternatives with the following property: For every distance function $d$, there is a committee $C\in P$ such that $\SC_q(C|d)< (1+2/e+\varepsilon) \cdot \SC_q(O|d)$, where $O$ is a committee of minimum social cost according to $d$, and $\varepsilon$ is a positive constant. Then, $\text{P} = \text{NP}$.
\end{theorem}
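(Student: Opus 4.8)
The plan is to reduce from a known NP-hard problem by engineering a family of ranking profiles for which the set $P$ returned by $\mathcal{A}$ must, implicitly, solve the hard problem. The natural candidate is a covering-type problem such as \textsc{Set Cover} (or \textsc{Max Coverage}), because the threshold $1+2/e$ smells exactly like the $1-1/e$ hardness of approximating \textsc{Max Coverage} / the $\ln$-type hardness of \textsc{Set Cover}; one should aim to have the committee structure encode a choice of sets and the $q$-cost structure encode how well the chosen sets cover a ground set. Concretely, I would start from an instance of \textsc{Max $t$-Coverage} (pick $t$ of the given subsets $S_1,\dots,S_r$ of a universe $\mathcal{U}$ to maximize the number of covered elements), and build a ranking profile in which: there is one ``universe agent'' per element of $\mathcal{U}$ (and/or gadget agents), there is a block of alternatives for each set $S_j$, the committee size $k$ and the parameter $q$ are tuned (with $q>k/2$) so that a $k$-sized committee corresponds to selecting roughly $t$ of the set-blocks, and the $q$-th closest alternative of a universe agent is ``cheap'' exactly when one of the sets covering that agent's element has been chosen and ``expensive'' otherwise.

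The key steps, in order: (i) Specify the gadget — how many agents, how many alternatives, the block structure, the values $k,q$ (ensuring $1\le k/2 < q\le k<m$, which forces the use of Lemma~\ref{lem:q>k/2-metric} so that talking about ``the'' social cost across consistent metrics is meaningful), and the ranking profile, purely ordinally. (ii) Define, for a given coverage solution, a witness metric $d$ consistent with the profile, and compute $\SC_q(O\mid d)$ for the optimal committee $O$; show it is small (proportional to, say, the number of uncovered elements under the best coverage, or a fixed small constant if full coverage is possible). (iii) Conversely, show that for \emph{every} $k$-sized committee $C$ that does not essentially correspond to an optimal coverage, there is a consistent metric $d_C$ making $\SC_q(C\mid d_C)$ large — a factor more than $1+2/e$ times the corresponding optimum — where the ratio $1+2/e$ should emerge by taking a metric that charges $1$ (or some unit) to each universe agent whose element is uncovered by $C$, set against an optimum that covers a $(1-1/e)$-ish fraction. (iv) Observe that $\mathcal{A}$ returns a \emph{polynomial-size} list $P$; if $\mathcal{A}$'s guarantee held with threshold $1+2/e+\varepsilon$, then for the particular adversarial metric $d_C$ attached to each bad $C$, no bad committee can witness the guarantee, so some $C^\star\in P$ must be a ``good'' committee — i.e., correspond to a near-optimal coverage. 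Since $|P|$ is polynomial and each committee can be inspected in polynomial time, we can read off a near-optimal coverage from $P$, contradicting the NP-hardness of approximating \textsc{Max Coverage} within $1-1/e+\delta$ (equivalently, we would get $\mathrm{P}=\mathrm{NP}$).

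The main obstacle — and the part that needs the most care — is step (iii): arranging the gadget so that the \emph{same} ordinal profile admits, for each candidate committee $C$, a consistent metric that is simultaneously (a) bad for $C$ and (b) witnesses a genuinely small optimum, with the two quantities in ratio exceeding $1+2/e$. This is delicate because all these metrics must be consistent with one fixed ranking profile, and the $q>k/2$ regime means every two committees share many ``top-$q$'' alternatives from some agent's viewpoint, which tends to compress $q$-costs; the gadget must break this compression selectively. A secondary subtlety is making the reduction map coverage solutions to committees \emph{bijectively enough} that ``near-optimal committee'' genuinely forces ``near-optimal coverage'' — one typically needs padding alternatives/agents so that any committee can, without loss of social cost, be rounded to one of the ``canonical'' set-selection committees (an argument in the spirit of the type-classification used in the proof of Theorem~\ref{thm:q>k/2-constant-n}). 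Once the gadget is pinned down, steps (i), (ii), and (iv) are routine bookkeeping.
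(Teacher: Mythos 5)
You have the right source problem (maximum coverage with Feige's $(1-1/e)$ gap hardness) and several of the right gadget ingredients: set-alternatives, one agent per universe element, padding so that $q>k/2$ is satisfiable, and a canonicalization step that rounds an arbitrary committee to a ``set-selection'' committee without increasing social cost. The paper's construction is in fact simpler than your block structure: $q-1$ special alternatives ranked first by everyone, one alternative per set, $k=q-1+K$, and distances $1$ (special alternatives and sets containing the agent's element) versus $3$ (all other set-alternatives).

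The genuine gap is in how you use the guarantee of $\mathcal{A}$, i.e., your steps (iii)--(iv). You propose a \emph{per-committee} adversarial metric $d_C$ for each bad committee $C$, and then claim that since no bad committee can ``witness the guarantee'' under its own $d_C$, some $C^\star\in P$ must be good. This does not follow from the quantifier structure of the hypothesis: the guarantee says that for each fixed metric $d$ there is \emph{some} $C\in P$ good for that $d$; applying it to $d_C$ only tells you that some committee in $P$ is cheap under $d_C$, and that committee may be a different bad-coverage committee $C'$ that happens to cover exactly the agents $d_C$ penalizes. To rule out the possibility that every $C\in P$ is a bad cover, you would need a \emph{single} consistent metric under which all of them are simultaneously expensive relative to the optimum --- which your plan never constructs (and constructing one against an arbitrary polynomial-size $P$ is essentially the appendix's probabilistic argument, which only achieves ratio $3/2$). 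The paper avoids this entirely by fixing \emph{one} metric determined by the coverage instance alone, under which the social cost of any canonical committee equals $n+2\cdot(\text{number of uncovered elements})$; then the single committee guaranteed good for that metric directly yields a $K$-collection covering more than a $(1-1/e+\varepsilon/2)$ fraction, and Feige's gap (full cover versus at most $1-1/e-\varepsilon/2$) finishes the argument. A minor further issue: your heuristic ``charge $1$ per uncovered agent against optimum $n$'' yields ratio $1+1/e$, not $1+2/e$; the factor $2$ comes from the $3$-versus-$1$ distance gap, since a committee covering a $\gamma$ fraction costs $n(3-2\gamma)$ while the optimum costs $n$.
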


\begin{proof}
We will prove the theorem using a reduction from the maximum $K$-cover problem. In maximum $K$-cover, we are given a universe $U$ of elements, a collection $\mathcal{S}$ of subsets of $U$, and an integer $K$. The objective is to find a subcollection of $K$ sets from $\mathcal{S}$ (or, a {\em $K$-collection}), whose union contains the maximum number of elements from $U$.

A well-known hardness result due to~\citet{Fei98} states that, for every constant $\varepsilon>0$, it is NP-hard to distinguish between the next two types for a maximum $K$-cover instance:
\begin{enumerate}
\item The instance has a $K$-collection that covers all elements of $U$.
\item Any $K$-collection covers at most a $(1-1/e-\varepsilon/2)$-fraction of the elements of $U$.
\end{enumerate}

Our reduction works as follows: Given an instance of maximum $K$-cover, we build a ranking profile with $n=|U|$ agents and $m=q-1+|\mathcal{S}|$ alternatives. Each agent corresponds to an element of $U$. There are $q-1$ special alternatives and, additionally, a distinct set-alternative for every set $S\in \mathcal{S}$. For $i\in U$, agent $i$ ranks the special alternatives first (in arbitrary order), followed by the set-alternatives that include element $i$, and then the remaining set-alternatives. The distance function $d$ is such that agent $i$ has distance $1$ from the special alternatives and the set-alternatives which contain element $i$, and distance $3$ from the remaining set-alternatives. The committee size is $k=q-1+K$.

Let us call {\em canonical} a $k$-sized committee that includes all special alternatives. Notice that a committee can be transformed to a canonical one by including the missing special alternatives and removing (arbitrarily) an equal number of set-alternatives. This transformation guarantees that the social cost (according to $d$) of the canonical committee is not worse than that of the original one. 

We now claim that if algorithm $\mathcal{A}$ satisfies the condition in the statement of Theorem~\ref{thm:hard-P}, then we can distinguish between the two cases above in polynomial time, contradicting the hardness statement of~\citet{Fei98}.
Indeed, consider the algorithm which, on input a ranking profile produced by our reduction, runs algorithm $\mathcal{A}$ first to compute a set $P$ of (polynomially many) $k$-sized committees, then transforms these committees to canonical ones, and, among the $K$-collections corresponding to these canonical committees (each formed by the sets that correspond to the set-alternatives), returns the one that covers the maximum number of elements.

Clearly, any $K$-collection returned by the algorithm covers at most a $(1-1/e-\varepsilon/2)$-fraction of the elements if the maximum $K$-cover instance is of type 2. If it is of type 1, let $\widehat{O}$ be a $K$-collection that covers all elements of $U$. Then, the canonical committee $O$ that contains the special alternatives and the set-alternatives corresponding to the sets of $\widehat{O}$ has a minimum possible social cost of $\SC_q(O|d)=n$. Indeed, since some of the sets of $\widehat{O}$ contain element $i$, agent $i$ has its $q$-th favorite alternative in $O$ at distance $1$ (recall that she has the special alternatives in the top $q-1$ positions of its preference, so the $q$-th closest alternatives is her top set-alternative in $O$). Now, by the approximation guarantee of algorithm $\mathcal{A}$, we know that there is a canonical committee among those obtained by the committees of $P$, that has $\SC_q(C|d) < (1+2/e+\varepsilon)n$. This means that the number of agents whose $q$-th closest alternative in $C$ is at distance $1$ is more than $(1-1/e+\varepsilon/2)n$ (the rest are at distance $3$). Equivalently, by the definition of our reduction, the $K$-collection that corresponds to $C$ covers more than $(1-1/e+\varepsilon/2)n$ elements of $U$. Hence, detecting whether the original instance of maximum $K$-cover is of type 1 or 2 can be decided by comparing the number of elements covered by the outcome of the algorithm with $(1-1/e+\varepsilon/2)n$.
\end{proof}

\section{Extensions and Open Problems}\label{sec:discussion}

In this work, we extended the metric distortion framework to multiwinner voting. By modeling the cost of an agent for a $k$-sized committee as her distance from her $q$-favorite alternative therein, we revealed a quite surprising trichotomy on the distortion of multiwinner rules in terms of $q$ and $k$, providing asymptotically tight bounds. The main question that our work leaves open is to identify the best possible distortion for when $q > k/2$ that can be achieved by an efficient deterministic rule.

We exclusively focused on the social cost and did not consider other objectives, such as the maximum agent cost. It is not hard to observe that our methods provide bounds in terms of this objective as well, albeit not necessarily tight. When $q \leq k/3$, the distortion remains unbounded (in the instances used in the proof of Theorem~\ref{thm:q<=k/3-LB}, the optimal committee guarantees cost $0$ to all agents, whereas any voting rule yields a positive cost to some agent). When $q \in (k/3,k/2)$, by carefully inspecting the proof of the upper bound of {\sc PolarOpposites} in Theorem~\ref{thm:k/3<=q<=k/2-upper}, we can show that it achieves constant distortion in terms of the maximum cost. Finally, when $q > k/2$, since the $q$-costs form a metric space, known results from single-winner voting extend to multiwinner voting; that is, there exists a deterministic multiwinner rule with distortion at most $3$, which is based on the rule of \citet{gkatzelis2020resolving}; this rule is known to achieve a \emph{fairness ratio} of at most $3$, which implies $3$-approximation of the maximum cost as well. The efficient upper bound of $9$ via our template also extends to the maximum cost, as the bound of $3$ in Lemma~\ref{lem:top-k-optimal} actually holds per agent. In the future, it would be interesting to go beyond objectives such as the social and maximum costs, and consider others that make sense in mutliwinner voting.

\section*{Acknowledgements}
We would like to thank Elliot Anshelevich for fruitful discussions in early stages of this work, and Piotr Skowron for suggesting an example demonstrating unbounded distortion for $q=1$, which has been generalized to $q \le k/3$ in \Cref{thm:q<=k/3-LB}.

\bibliographystyle{plainnat}
\bibliography{abb,ultimate,distortion,references}

\section*{Appendix}



\section*{A Weaker Information-Theoretic Lower Bound}
\begin{theorem}
Let $\epsilon>0$ be a constant, $m\geq 2$ an even integer, and $q=k=m/2$. There exist a set $A$ of $m$ alternatives, a set $N$ of $m$ agent with a preference profile $(\succ_i)_{i\in N}$ over the alternatives of $A$, a distance function $d$ that is consistent with the preference profile $\succ$, and a $k$-sized committee $O$, such that for any set $P$ of less than $\exp(\epsilon^2 m)$ $k$-sized committees of alternatives, it holds $\SC_q(C) \geq \left(\frac{3}{2}-\epsilon\right) \SC_q(O)$ for every committee $C\in P$.
\end{theorem}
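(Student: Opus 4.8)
The plan is to exhibit one preference profile together with a large family $\mathcal D$ of pairwise very different consistent distance functions, and argue that any set $P$ of $<\exp(\epsilon^2 m)$ committees must be far from the optimum for at least one $d\in\mathcal D$. Since $q=k=m/2$, the cost $c_i(C)=\max_{x\in C}d(i,x)$ is just the distance to the farthest alternative in $C$, and a metric consistent with the profile is any per-agent assignment of distances that is nondecreasing along the agent's ranking; taking all distances in $[1,3]$ makes the agent-alternative triangle inequality automatic, giving a lot of slack. I would group the $m$ alternatives into $r=m/2$ ``pairs'', so that committees picking exactly one alternative per pair are in bijection with $\{0,1\}^r$ (the \emph{canonical} committees, with symmetric difference of two of them equal to twice the Hamming distance of the corresponding strings). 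For a string $w$ in a set $\mathcal W\subseteq\{0,1\}^r$, with $W_w$ the corresponding canonical committee, define $d_w$ by setting $d_w(i,x)=1$ for the top $r_i(W_w)$ alternatives of agent $i$ (where $r_i(S)$ is the worst rank of an alternative of $S$ in $\succ_i$) and $d_w(i,x)=3$ otherwise; this is consistent with the profile, obeys the triangle inequality, and makes $W_w$ an optimal committee with $\SC_q(W_w|d_w)=n$. Let $\mathcal D=\{d_w:w\in\mathcal W\}$, and for a given $P$ we will take $d=d_w$ and $O=W_w$ for a suitable $w$.

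The profile itself would be built from $n=m$ agents, each described by a ``type'' $t_i\in\{0,1\}^r$ and an ordering $\sigma_i$ of the pairs, with agent $i$ ranking, in $\sigma_i$-order, first all alternatives of the pairs that match $t_i$ and then all non-matching ones. The key structural lemma I would prove is: for the chosen profile, if a committee $C$ satisfies $\SC_q(C|d_w)<(3/2-\epsilon)n$ (``$C$ is good for $d_w$'', equivalently fewer than $(1/4-\epsilon/2)n$ agents have $r_i(C)>r_i(W_w)$), then $|C\setminus W_w|<(1/3-\Omega(\epsilon))r$. The intuition is that $r_i(C)\le r_i(W_w)$ amounts, for a typical agent, to a comparison of which of $C$ and $W_w$ disagrees with $t_i$ at the agent's last pair, and the per-agent probability of this event behaves like $r/(r+|C\setminus W_w|)$, so a $(3/4+\epsilon/2)$-majority of agents forces $C$ to agree with $W_w$ on many coordinates. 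Given the lemma, if $C$ is good for both $d_w$ and $d_{w'}$ then $|W_w\triangle W_{w'}|\le|W_w\triangle C|+|C\triangle W_{w'}|<(4/3-\Omega(\epsilon))r$, i.e.\ the Hamming distance of $w$ and $w'$ is $<(2/3-\Omega(\epsilon))r$; choosing $\mathcal W$ to be an error-correcting code of relative minimum distance slightly below $2/3$ — which by Gilbert--Varshamov can still have size $\ge 2^{(1-H(2/3)-O(\epsilon))r}\ge\exp(\epsilon^2 m)$ for sufficiently small constant $\epsilon$, $H$ the binary entropy — this forces $w=w'$. Hence each committee is good for at most one member of $\mathcal D$, so if $P$ contains a good committee for every $d\in\mathcal D$ then $|P|\ge|\mathcal W|\ge\exp(\epsilon^2 m)$; contrapositively, any $P$ with $|P|<\exp(\epsilon^2 m)$ misses some $d_w$, meaning every $C\in P$ has $\SC_q(C|d_w)\ge(3/2-\epsilon)\SC_q(W_w|d_w)$, which is the claim. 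The remaining bookkeeping is to check the lemma also for non-canonical committees (those taking both alternatives of some pair), which should follow from the same computation after noting such a $C$ already contains a ``non-matching'' alternative for every agent and hence has $r_i(C)>r$ for all $i$.

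The step I expect to be the main obstacle is the structural lemma itself. It has to hold for a \emph{single fixed} profile and for \emph{all} committees $C$ at once, but there are $\binom{m}{m/2}=2^{\Theta(m)}$ committees while we only have $n=\Theta(m)$ agents; the concentration one gets per committee (from a Chernoff bound over agents, with an $\Theta(\epsilon)$-gap between the true and threshold fractions) is of order $\exp(-\Theta(\epsilon^2)m)$, which is far too weak to survive a union bound over all $2^{\Theta(m)}$ committees. Consequently a plainly uniformly-random profile does not suffice, and the types $t_i$ (and orderings $\sigma_i$) must be chosen with enough structure — so that they jointly ``probe'' every coordinate and the aggregate of the comparisons $r_i(C)\le r_i(W_w)$ collapses to a clean condition on $|C\setminus W_w|$ — to make ``good implies close'' hold deterministically for every committee. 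Identifying that explicit construction and tuning the constants so the statement reads $3/2-\epsilon$ with threshold $\exp(\epsilon^2 m)$ is where the real work lies; everything after the lemma is the short packing/covering argument sketched above.
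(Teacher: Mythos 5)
You have not given a proof: the ``structural lemma'' at the heart of your plan is, by your own account, left unconstructed, and the packing argument built on top of it is unsound even if the lemma were granted. Gilbert--Varshamov only produces binary codes of relative minimum distance below $1/2$; binary codes of relative distance about $2/3$ have only constantly many codewords (Plotkin bound), so the exponentially large family $\mathcal{W}$ you invoke does not exist. Worse, the radius you could hope for in such a lemma is capped by consistency itself: the triangle inequality $d(i,x)\le d(i,y)+d(j,y)+d(j,x)$ forces every ``far'' distance to be at most $3$ once other agents sit at distance $1$, so in any paired construction the per-pair cost of a committee exceeds the optimal per-pair cost of $2$ by at most $2$. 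Hence $\SC_q(C|d_w)$ rises from $2r$ by at most $2$ per coordinate of disagreement with $w$, and ``good for the $(3/2-\epsilon)$ threshold'' can only mean relative Hamming distance up to $1/2-\epsilon$, not $1/3$; disjointness of these good-balls would then require codes of relative distance $1-2\epsilon$, which is impossible at exponential size. So the deterministic ``good implies close'' plus code-packing route cannot reach the constant $3/2-\epsilon$ at all.

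The missing idea --- and the one the paper's proof uses --- is to swap the quantifiers: fix the small set $P$ first and randomize the metric \emph{against} $P$, so the union bound runs only over the fewer than $\exp(\epsilon^2 m)$ committees of $P$ rather than over all $\binom{m}{m/2}$ committees; this is exactly how the obstacle you flagged dissolves. Concretely, with the paired profile in which agents $2i-1$ and $2i$ place $a_{2i-1},a_{2i}$ in their two bottom positions ($a_{2i-1}$ last for agent $2i-1$, $a_{2i}$ last for agent $2i$), flip an independent fair coin per pair to decide which of the two alternatives is at distance $3$ from the agent ranking it last (all other agent--alternative distances being $1$), and let $O$ contain the other alternative of each pair. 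Then $\SC_q(O)=m$, and for any \emph{fixed} committee $C$ the per-pair cost contributions are independent random variables in $\{2,4\}$ whose expectations sum to $3m/2$, so Hoeffding gives $\Pr[\SC_q(C)\le(3/2-\epsilon)m]\le\exp(-\epsilon^2 m)$; a union bound over $C\in P$ and the probabilistic method then yield a single consistent $d$ and committee $O$ witnessing the theorem. No structural lemma over all committees, and no error-correcting code, is needed.
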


\begin{proof}
Let $A=\{a_1, ...a_m\}$, $N=\{1, 2, ..., m\}$, and consider the following preference profile $(\succ_i)_{i\in N}$: For $i=1, 2, ..., m/2$, both agents $2i-1$ and $2i$ have alternatives $a_{2i-1}$ and $a_{2i}$ at the two bottom positions of their ranking, while the remaining alternatives are distributed arbitrarily in the remaining $m-2$ positions. Agent $2i-1$ has alternative $a_{2i-1}$ last (i.e., $a_{2i}\succ_{2i-1} a_{2i-1}$) while agent $2i$ has alternative $a_{2i}$ last (i.e., $a_{2i-1} \succ_{2i} a_{2i}$).

Our proof uses the probabilistic method (e.g., see~\cite{MR95}). We will define a random distance function $d$ that is consistent with the profile above, as well as a random committee $O$ of $k=m/2$ alternatives, so that by defining the social cost $\SC_q$ using $d$, we have $\SC_q(O)=m$ and, with strictly positive probability, $\SC_q(C)> \left(\frac{3}{2}-\epsilon\right)m$ for each committee $C\in P$. This probabilistic argument guarantees that there exists a distance function such that $\SC_q(C) > \left(\frac{3}{2}-\epsilon\right) \SC_q(O)$, for every committee $C\in P$.

Initially, committee $O$ is empty. For $i=1, 2, ..., m/2$, toss a fair coin, independently for different $i$'s:
\begin{itemize}
    \item On heads, set $d(2i-1,a_{2i-1})=3$, $d(2i-1,a_j)=1$ for $j\not=2i-1$, and $d(2i,a_j)=1$ for $j=1, ..., m$. Include alternative $a_{2i}$ in $O$.
    \item On tails, set $d(2i,a_{2i})=3$, $d(2i,a_j)=1$ for $j\not=2i$, and $d(2i-1,a_j)=1$ for $j=1, ..., m$. Include alternative $a_{2i-1}$ in $O$.
\end{itemize}
Observe that for each agent, all alternatives are at distance $1$, besides (possibly) the alternative that is ranked last, which is at distance $3$. Hence, $d$ is indeed a distance function that is consistent with the profile. In addition, the alternative that is included in committee $O$ at step $i$ is the $q$-th closest (recall that $q=m/2$) to both agents $2i-1$ and $2i$ and, furthermore, it is at distance $1$ from both of them. Hence, $\SC_q(O)=m$.

Now, consider a committee $C\in P$. For $i=1, ..., m/2$, let $X_i$ be the random variable denoting the sum $c_q(2i-1,C)+c_q(2i,C)$. Notice that
\begin{align}\label{eq:sum-of-irv}
    \SC_q(C) &= \sum_{i=1}^n{c_q(i,C)}=\sum_{i=1}^{m/2}{X_i}.
\end{align}
Now consider each random variable $X_i$ separately. We distinguish between four cases:
\begin{itemize}
    \item If $C$ contains both alternatives $a_{2i-1}$ and $a_{2i}$, then $c_q(2i-1,C)=d(2i-1,a_{2i-1})$ and $c_q(2i,a_{2i})=d(2i,a_{2i})$. Notice that, by the definition of the random process, among these two quantities, one is equal to $1$ and the other is equal to $3$. Hence, $X_i=4$.
    \item If $C$ contains neither $a_{2i-1}$ nor $a_{2i}$, then both agents $2i-1$ and $2i$ are at distance $1$ from all alternatives in $C$. Hence, $X_i=2$.
    \item If $C$ contains alternative $a_{2i-1}$ but not alternative $a_{2i}$, notice that $a_{2i-1}$ is the alternative of $C$ that is the $q$-th closest to both agents $2i-1$ and $2i$. Recall that $d(2i,a_{2i-1})$ is always equal to $1$ while $d(2i-1,a_{2i-1})$ is equiprobably equal to either $3$ (and, hence, $X_i=4$) or $1$ (and, hence, $X_i=2$), depending on whether the outcome of the $i$-th coin toss is heads or tails, respectively.
    \item The case when $C$ contains alternative $a_{2i}$ but not $a_{2i-1}$ is completely symmetric. Then, $X_i=2$ or $X_i=4$ equiprobably, depending on whether the outcome of the $i$-th coin toss is heads or tails, respectively.
\end{itemize}

We conclude that $X_1, ..., X_{m/2}$ are independent random variables, taking their values in $\{2,4\}$, with 
\begin{align}\label{eq:exp-of-x_i}
    \E[X_i] &= 2+|C \cap \{a_{2i-1},a_{2i}\}|.
\end{align}
Hence, we can use the well-known Hoeffding bound to estimate the probability $\Pr[\SC_q(C)\leq \left(\frac{3}{2}-\epsilon\right)m]$.

\begin{lemma}[\citealt{Hoeff63}]\label{lem:hoeffding}
Let $Z_1$, $Z_2$, ..., $Z_\ell$ be independent random variables with $Z_i\in [a_i,b_i]$ and $Z=\sum_{i=1}^\ell{Z_i}$. Then, for every $t> 0$,
\begin{align*}
    \Pr[\E[Z]-Z \geq t] &\leq \exp\left(-\frac{2t^2}{\sum_{i=1}^\ell{(b_i-a_i)^2}}\right).
\end{align*}
\end{lemma}

By (\ref{eq:sum-of-irv}), (\ref{eq:exp-of-x_i}), and using linearity of expectation, we have
\begin{align*}
    \E[\SC_q(C)] &= \sum_{i=1}^{m/2}{\E[X_i]} = \sum_{i=1}^{m/2}{\left(2+|C\cap \{a_{2i-1},a_{2i}\}|\right)} = 3m/2.
\end{align*}
Using Lemma~\ref{lem:hoeffding} with $\ell=m/2$, $Z=\SC_q(C)$, $Z_i=X_i$, $b_i=4$, and $a_i=2$ for $i=1, ..., m/2$, and $t=\epsilon m$, we get
\begin{align*}
    \Pr[\SC_q(C)\leq \left(\frac{3}{2}-\epsilon\right)m]= \Pr[\E[\SC_q(C)]-\SC_q(C)\geq \epsilon m] \leq \exp\left(-\epsilon^2 m\right).
\end{align*}
By applying the union bound, the probability that $\SC_q(C)\leq \left(\frac{3}{2}-\epsilon\right)m$ for some committee $C\in P$, is less than $1$. We conclude that, with strictly positive probability, $\SC_q(C)>\left(\frac{3}{2}-\epsilon\right)m$ for every committee $C\in P$, as desired.
\end{proof}

\end{document}